\documentclass[prl,twocolumn,preprintnumbers,superscriptaddress,amsmath,amssymb]{revtex4-1}
\usepackage{graphicx}
\usepackage{subfigure}
\usepackage{mathrsfs}
\usepackage{amsfonts}
\usepackage{times}
\usepackage{amsmath}
\usepackage{amsthm}
\usepackage{leftidx}
\usepackage{tikz}
\usepackage{tikz-network}
\usepackage{color}
\usepackage[colorlinks,linkcolor=blue,citecolor=blue]{hyperref}

\newcommand{\Tr}{\operatorname{Tr}}

\newtheorem{theorem}{Theorem}

\usepackage{bbold}
\usepackage{braket}
\usepackage{mathtools}

\begin{document}
\title{Threefold Way for Typical Entanglement}
\author{Haruki Yagi}
\affiliation{Department of Applied Physics, University of Tokyo, 7-3-1 Hongo, Bunkyo-ku, Tokyo 113-0033, Japan}
\author{Ken Mochizuki}
\affiliation{Department of Applied Physics, University of Tokyo, 7-3-1 Hongo, Bunkyo-ku, Tokyo 113-0033, Japan}
\affiliation{Nonequilibrium Quantum Statistical Mechanics RIKEN Hakubi Research Team,
RIKEN Cluster for Pioneering Research (CPR), 2-1 Hirosawa, Wako 351-0198, Saitama, Japan}
\author{Zongping Gong}
\affiliation{Department of Applied Physics, University of Tokyo, 7-3-1 Hongo, Bunkyo-ku, Tokyo 113-0033, Japan}
\date{\today}

\begin{abstract}
A typical quantum state with no symmetry can be realized by letting a random unitary act on a fixed state, and the subsystem entanglement spectrum follows the Laguerre unitary ensemble (LUE). For integer-spin time reversal symmetry, we have an analogous scenario where we prepare a time-reversal symmetric state and let random orthogonal matrices act on it, leading to the Laguerre orthogonal ensemble (LOE). However, for half-integer-spin time reversal symmetry, a straightforward analogue leading to the Laguerre symplectic ensemble (LSE) is no longer valid due to that time reversal symmetric state is forbidden by the Kramers' theorem. We devise a system in which the global time reversal operator is \emph{fractionalized} on the subsystems, and show that LSE arises in the system. Extending this idea, we incorporate general symmetry fractionalization into the system, and show that the statistics of the entanglement spectrum is decomposed into a direct sum of LOE, LUE, and/or LSE. Here, various degeneracies in the entanglement spectrum may appear, depending on the non-Abelian nature of the symmetry group and the cohomology class of the non-trivial projective representation on the subsystem. Our work establishes the entanglement counterpart of the Dyson's threefold way for Hamiltonians with symmetries.
\end{abstract}
\maketitle

\emph{Introduction.---}
Entanglement is a genuine quantum property and has played an increasingly important role in physics \cite{Haroche2001,horodecki2008entanglement,Horodecki2009,Eisert2010colloquium,nielsen_chuang_2010,Pan2012,Harlow2016,Abanin2019,Cirac2021}.  
A renowned example in high-energy physics is the black hole information paradox, into which entanglement brought about crucial insights \cite{Mathur2009,Almheiri2021,Raju2022}. 
Historically, Page made a breakthrough by introducing the celebrated Page curve in 1990s, which captures the statistical property of information loss \cite{hawking1974black,hawking1976breakdown,bekenstein1981universal,page1993information,page1993average}.
Meanwhile, random matrix theory (RMT) was developed as a powerful tool for analysis \cite{page1993average,foong1994proof,sanchez1995simple,sen1996average,bianchi2022volume}. The black hole information paradox itself continued to be actively studied and debated ever after \cite{Hayden2007,Maldacena2013,Penington2020}.
Moreover, entanglement evaluation and the Page curve have impacted many other fields such as quantum thermalization and measurement-induced phase transitions \cite{Popescu2006,nahum2017quantum,bao2020theory,skinner2019measurement}.

The original Page curve deals with entanglement in completely random quantum states, as is consistent with the maximal chaotic nature of black holes \cite{Sekino2008,Maldacena2016,Cotler2017}. However, most physical systems, ranging from microscopic molecules to macroscopic materials, exhibit some symmetries \cite{Lax74}. 
An exceptional representative is the \emph{time reversal symmetry} (TRS), which is of particular privilege in RMT because it makes a qualitative difference to the matrix ensemble through specifying whether the matrix elements are real, complex or quaternionic \cite{forrester2010log}.
For Hamiltonians described by Hermitian matrices, the corresponding Gaussian orthogonal, unitary, and symplectic ensembles are well-known as the Dyson's threefold way \cite{dyson1962threefold}. For reduced density matrices described by Hermite semidefinite matrices, there are also three distinct matrix ensembles: 
the Laguerre orthogonal, unitary, and symplectic ensemble (LOE, LUE, and LSE).  
A typical reduced density matrix without symmetry, as is the case of the original Page curve, follows the LUE \cite{forrester2010log}. 
In the presence of integer-spin TRS, one naturally starts from random time-reversal symmetric states and can easily check a typical reduced density matrix follows the LOE. Remarkably, this picture breaks down for half-integer-spin TRS since \emph{no} eigenstate of TRS exists due to the Kramers' theorem \cite{Bernevig2013}. Accordingly, there appears to be no straightforwardly analogous physical interpretation of the remaining LSE. 

In this Letter, we unveil the entanglement interpretation of the LSE by exploiting the notion of \emph{symmetry fractionalization}, which is well-known in the studies of quantum anomaly and topological phases \cite{Chen2011,Pollmann2012,Delmastro2023}. Regarding the TRS, the idea is simply that an integer spin can be fractionalized into two half-integer spins for each subsystems, \`{a} la the edges of the Haldane phase \cite{Haldane1983}. 
Furthermore, this idea can be extended to more general symmetries and their anomalous (projective) realizations. We show that for any symmetry described by a finite group, the ensemble of the corresponding symmetric density matrices can always be decomposed into a direct sum of LOE, LUE, or/and LSE. Our work thus establishes the entanglement counterpart of the Dyson's threefold way \cite{dyson1962threefold}.

\emph{Entanglement spectrum and RMT.---}
In a bipartite system $a\cup b$, 
the \emph{entanglement spectrum} of a pure state $\ket{\Psi}$ is defined as the eigenvalues of the reduced density matrix \cite{Li2008,Fidkowski2010,pollmann2010entanglement}. Without loss of generality, we assume the Hilbert space dimension of $a$ is not larger than the dimension of $b$. Expanding the state as $\ket{\Psi}=\sum_{a,b}w_{ab}\ket{a}\ket{b}$, one can obtain the reduced density matrix on $a$ as $\rho_a=\Tr_b|\Psi\rangle\langle\Psi|=WW^\dagger$ with $W=\sum_{a,b}w_{ab}|a\rangle\langle b|$ \footnote{Here we abuse the same letter to refer to a subsystem and a quantum state living in it.}. When we are interested in the typical behavior of entanglement, we sample $\ket{\Psi}$ with no symmetry from the Haar measure on the total Hilbert space. 
Alternatively, we can sample $w_{ab}$ from the complex standard normal distribution $\mathcal{CN}(0,1)$ (and normalize to $\braket{\Psi|\Psi}=1$ later) \cite{zyczkowski2001induced,nechita2007asymptotics}, implying $\rho$ follows the LUE (up to normalization).  
Likewise, $\rho$ follows the LOE (LSE) if the random matrix elements $w_{ab}$ are replaced by random real numbers (quaternions).

Another equivalent way to explore typical entanglement is sampling $U$ from circular unitary ensemble (CUE) and adopt $U\ket{0}$ ($|0\rangle$: arbitrary fixed reference state) as a random state. This is exactly the picture underlying the original Page curve \cite{page1993information,page1993average}. The real-number/quaternion counterpart of CUE is known as the circular orthogonal ensemble (COE)/circular symplectic ensemble (CSE) \cite{forrester2010log}. 
It consists of random unitaries preserving the integer/half-integer spin TRS, which is represented by an anti-unitary operator $\mathcal{T}$ squaring to identity/minus identity \cite{Wigner1960}. If $\mathcal{T}^2=\mathbb{1}$, one can further choose $|0\rangle$ to be time-reversal symmetric, i.e., $|0\rangle=\mathcal{T}|0\rangle$, and obtain the LOE. If $\mathcal{T}^2=-\mathbb{1}$, however, the Kramers' theorem states that $\mathcal{T}|0\rangle$ is always orthogonal to $|0\rangle$, implying $|0\rangle$ can never be time-reversal symmetric \cite{Bernevig2013}. If we drop the symmetry constraint on $|0\rangle$, it is recently shown that $U|0\rangle$ generates the same state ensemble for CSE and CUE \cite{west2024randomensemblessymplecticunitary}. A direct analogue thus breaks down for the LSE. 

\emph{Symmetry fractionalization.---}
Noteworthily, $\rho_a$ following the LSE only implies $[\mathcal{T}_a,\rho_a]=0$ for some $\mathcal{T}_a^2=-\mathbb{1}_a$, where $\mathbb{1}_a$ is the identity operator on the subsystem $a$. If $\mathcal{T}^2_b=-\mathbb{1}_b$, the entire TRS $\mathcal{T}=\mathcal{T}_a\otimes\mathcal{T}_b$ could be involutory, i.e., $\mathcal{T}^2=(-\mathbb{1}_a)\otimes(-\mathbb{1}_b)=\mathbb{1}$, allowing the existence of globally time-reversal symmetric (pure) states. 
The underlying physical intuition is an integer spin can be decomposed into two half-integer spins. In contrast, the LOE corresponds to a more natural division of subsystems with $\mathcal{T}_a^2=\mathbb{1}_a$ and $\mathcal{T}_b^2=\mathbb{1}_b$, corresponding to a decomposition of an integer spin into two integer spins.

Let us consider the decomposition 
of TRS in further detail. By properly choosing the basis, we have $\mathcal{T}=\mathbb{1}_a\otimes\mathbb{1}_b K$ ($K$: complex conjugation) for the normal decomposition corresponding to the LOE, and
$\mathcal{T}=(\mathbb{1}'_a \otimes i\sigma_y)\otimes(i\sigma_y\otimes\mathbb{1}'_b)K$ ($\mathbb{1}_{a}=\mathbb{1}'_{a}\otimes\mathbb{1}_2$ and $\mathbb{1}_{b}=\mathbb{1}_2\otimes\mathbb{1}'_{b}$) for the spin fractionalization decomposition \cite{Wigner1960}, which hopefully gives the LSE. Note that these two decompositions can be switched into each other via a local unitary conjugation by \cite{cirac2017matrix} 
\begin{equation}
    \Upsilon=\mathbb{1}_a'\otimes\frac{1-i}{2}\left[\mathbb{1}_4-i\left(\sigma_y\otimes \sigma_y\right)\right]\otimes\mathbb{1}_b',
    \label{Ups}
\end{equation}
i.e., $\Upsilon \left[\mathbb{1}_a\otimes\mathbb{1}_b K\right] \Upsilon^\dag=\left(\mathbb{1}_a'\otimes i\sigma_y\right)\otimes\left(i\sigma_y\otimes \mathbb{1}_b'\right) K$. 
This fractionalization implies $[\mathbb{1}_L\otimes i\sigma_y K, \rho'_a]=0$, where $\rho'_a=\Tr_b\Upsilon|\Psi\rangle\langle\Psi|\Upsilon^\dag$.
One can check that $\Upsilon$ indeed turns a LOE into a LSE: 
We start from a random integer-spin TRS state $\ket{\Psi}$ 
represented as a vector with random real elements, therefore $\rho_a=WW^\dag$ produces the LOE. By applying $\Upsilon$ on $\ket{\Psi}$, we find four independent random real elements $\{x_j\}^4_{j=1}$ in a $2\times 2$ block $\begin{pmatrix}x_1&x_2\\x_3&x_4\end{pmatrix}$
of $W$ is rearranged into two independent complex elements $q_1=\frac{1-i}{2}(x_1+ix_2), q_2=\frac{1-i}{2}(x_3-ix_4)$ and their complex conjugates with proper signs.  
Namely, the corresponding $2\times 2$ block becomes $\begin{pmatrix}q_1&q_2\\-\overline{q}_2&\overline{q}_1\end{pmatrix}$, 
which is the spin representation of quaternion with independent coefficients \cite{SM}. This implies $\rho'_a=\Tr_b\Upsilon|\Psi\rangle\langle\Psi|\Upsilon^\dag$ produces the LSE.

This scenario can be generalized to symmetries other than TRS. 
Suppose the system exhibits a global symmetry described by group $G$ and represented by $\mathbf{D}(g)$, which may contain both unitary and anti-unitary operators. If the global representation admits a tensor product decomposition on subsystems $a$ and $b$, i.e., $\mathbf{D}(g)=\mathcal{D}_a(g)\otimes\mathcal{D}_b(g)$, then $\mathcal{D}_{a,b}(g)$ will generally be a \emph{projective} representation. Technically, if $\mathbf{D}(g)$ 
is a linear (anti-linear) representation, there exists a 2-cocycle $\omega$ such that $\mathcal{D}_a(g)\mathcal{D}_a(g')=\omega(g,g')\mathcal{D}_a(gg')$ and $\mathcal{D}_b(g)
\mathcal{D}_b(g')=\overline{\omega(g,g')}\mathcal{D}_b(gg')$ $\forall g,g'\in G$ ($\omega(g,g')\in U(1)$), where the overline denotes the complex conjugation. This is generally referred to as \emph{symmetry fractionalization}, which is well-known in the context of symmetry-protected topological (SPT) phases \cite{chen2013symmetry,zeng2019quantum}. 
Canonical examples include the Haldane spin chain \cite{Haldane1983} and the Affleck-Kennedy-Lieb-Tasaki (AKLT) model \cite{affleck1987rigorous,tasaki2020physics} with spin-1/2 quasiparticles at both edges under the open boundary condition even though the bulk elements are of spin-1. 
The projective realization of symmetry also gives the simplest example of 't Hooft anomaly (in ($0+1$) dimension) \cite{Delmastro2023}, which, just like the Kramers' theorem, forbids any pure state to be symmetric. In the following, we show how to incorporate the general scenario of symmetry fractionalization into RMT, and what ensembles the reduced density matrices obey.

\emph{General setup.---}
We consider general unitary symmetries described by a finite group $G_0$, possibly combined with the TRS $\mathbb{Z}_2^{\mathcal{T}}$. In the (former) latter case, the entire symmetry group reads ($G=G_0$) $G=G_0\rtimes \mathbb{Z}_2^\mathcal{T}$ with $\mathbb{Z}_2^\mathcal{T}=\{e,t\}$ and $\tilde g=tgt\in G_0$ $\forall g\in G_0$. As widely used to construct SPT phases and topological quantum field theories \cite{chen2013symmetry, zeng2019quantum, dijkgraaf1990topological}, the system is a lattice consisting of $N$ $G_0$-spins \cite{Szlachanyi1993}. Each local Hilbert space is spanned by an orthonormal basis $\{|g\rangle\}_{g\in G_0}$, which are transformed via $D(g)|h\rangle=|gh\rangle$ by an on-site (regular) unitary representation $D(g)$. 

In the presence of TRS, we further have $D(t)|g\rangle=|\tilde g\rangle$ with $D(t)$ being an on-site anti-unitary representation. We are interested in the typical entanglement spectrum of a $G$-symmetric many-body state $|\Psi\rangle=D(g)^{\otimes N}|\Psi\rangle$ $\forall g\in G$. 

Given a real-space bipartition $a\cup b$, we can always rearrange the sites into a one-dimensional configuration such that all the $G_0$-spins in $a$ ($b$) are on the left (right) side of a cut (see Fig.~\ref{fig:setup} (a) and (b)). Such a reconfiguration does not rely on the detail (e.g., dimension and topology) of the lattice, and has even become accessible in experiments \cite{Bluvstein2024}. Denoting the two $G_0$-spins nearest to the cut as $l$, $r$, and the remaining parts as $L=a\backslash l$ and $R=b\backslash r$, we can perform a unitary transformation that turns $D(g)^{\otimes N}$ into $\mathbb{1}_L\otimes D(g)\otimes D(g)\otimes \mathbb{1}_R$ while preserves the entanglement spectrum (see Fig.~\ref{fig:setup} (b)) \cite{SM}. Hereafter, we work in this rotating frame, which will make the symmetry fractionalization picture clearer while greatly simplify the analysis. Also, it allows the (Hilbert space) dimension of $L$ or $R$, which we will denote as $d_L$ or $d_R$ respectively, to be an arbitrary integer rather than a power of $|G_0|$.

\begin{figure}
\begin{center}
       \includegraphics[width=8.5cm, clip]{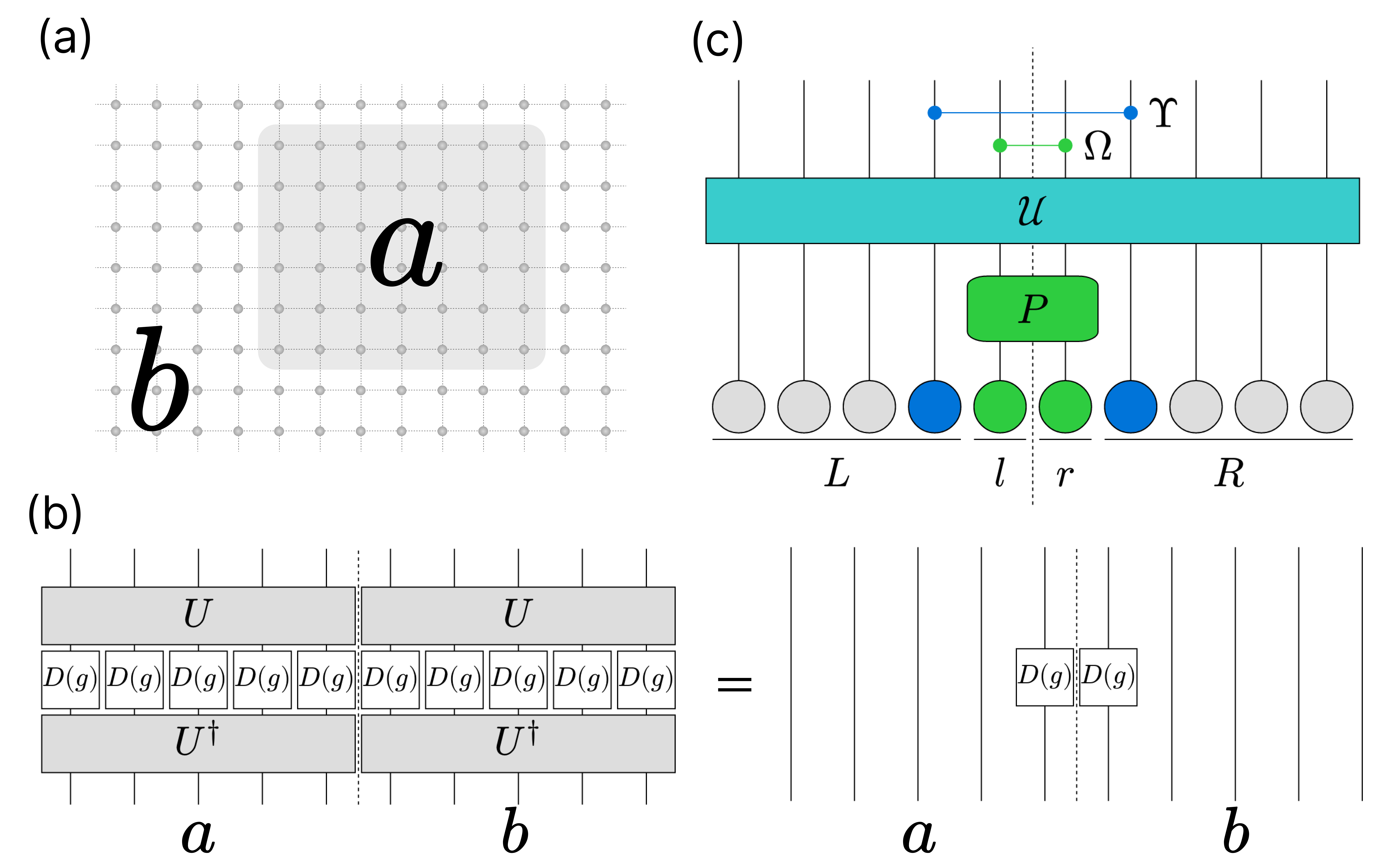}
       \end{center}
   \caption{(a) Bipartitite system $a\cup b$ in an arbitrary (e.g., two) spatial dimension. The system is supposed to have an onsite symmetry $G$.
   (b) Local unitary conjugation of the onsite symmetry $D(g)^{\otimes N}$ gives $\mathbb{1}_L\otimes D(g)\otimes D(g)\otimes \mathbb{1}_R$, shown in the rearranged one-dimensional configuration. (c) Configuration of the system consisting of $L,l,r,R$. 
   Here $P$ acting on $l\cup r$ is a projection from the $|G_0|^2$-dimensional space onto the $|G_0|$-dimensional $G_0$-symmetric space. Within this subspace, $\mathcal{U}$ is a unitary operator sampled from the Haar measure on some compact Lie groups. If $G=G_0$ ($G=G_0\rtimes\mathbb{Z}_2^\mathcal{T}$), that Lie group is the unitary group (isomorphic to the orthogonal group). Application of $\Omega$ (or/and $\Upsilon$) realizes the symmetry fractionalization of $G_0$ ($G_0\rtimes\mathbb{Z}_2^\mathcal{T}$).}
   \label{fig:setup}
\end{figure}

Let us identify the explicit form of $w_{ab}$ constrained by symmetry. Note that the projector onto the $G_0$-symmetric subspace of $l\cup r$ is given by $P=|G_0|^{-1}\sum_{g\in G_0}D(g)\otimes D(g)$, whose trace is $|G_0|$, which gives the dimension of the subspace. In fact, one can choose the orthonormal basis to be 
\begin{equation}
    \ket{\psi_g} = \frac{1}{\sqrt{|G_0|}} \sum_{h \in G_0} \ket{h g}\ket{h},\quad \forall g \in G_0. 
\end{equation}
These $G_0$-symmetric states are not necessarily invariant under time reversal (unless $G=G_0\times \mathbb{Z}^{\mathcal{T}}_2$):
$D(t) \otimes D(t)
\ket{\psi_g}= \ket{\psi_{\tilde{g}}}$.
In terms of $|\psi_g\rangle$, we can write down a random $G$-symmetric state of the entire system as
\begin{equation}
\begin{split}
    \ket{\Psi} &= \sum_{L,g,R} c_{L,g,R} \ket{L}\ket{\psi_g}\ket{R}\\
    &= \frac{1}{\sqrt{|G_0|}} \sum_{L,g_l,g_r,R} c_{L,g_r^{-1}g_l, R}\ket{L}\ket{g_l}\ket{g_r}\ket{R},
\end{split}
\label{Psi}
\end{equation}
where $c_{L,g,R}$'s are sampled independently from $\mathcal{CN}(0,1)$. In the presence of TRS, we further require $c_{L,\tilde{g},R}=\overline{c_{L,g,R}}$. In particular, this implies $c_{L,g,R}$ follows the real normal distribution $\mathcal{N}(0,1)$ if $\tilde g=g$ \footnote{This setup does not guarantee $\braket{\Psi|\Psi}=1$. After all random numbers $c_{L,g,R}$ have been sampled, a final normalization is performed to obtain a state with a norm of 1. Note that the discussion that follows describes the unnormalized states. Yet the distributions of entanglement spectra for normalized states are explicitly figured out in the Supplemental Material.}. So far, we have $w_{ab}\propto c_{L,g_r^{-1}g_l,R}$ ($a=L g_l$, $b=g_r R$) and $[\mathbb{1}_L\otimes D(g),\rho_a]=0$ $\forall g\in G$.

We move on to incorporate symmetry fractionalization. To impose a fractionalized (projective) symmetry to the subsystem, we only have to apply the following local unitary gate $\Omega$ (supported on $l\cup r$) to $\ket{\Psi}$ (\ref{Psi}) \cite{classification2020gong}:
\begin{equation}
    \Omega=\sum_{g_l,g_r \in G_0} \omega(g_r,g_r^{-1}g_l)\ket{g_l,g_r}\bra{g_l,g_r}, 
\end{equation}
where $\omega$ is a 2-cocycle with nontrivial cohomology class. Now we have $[\mathbb{1}_L\otimes\mathcal{D}(g),\rho_a]=0$, where the projective representation $\mathcal{D}(g)$ identified from $\mathcal{D}(g)\otimes\overline{\mathcal{D}(g)}=\Omega \left[D(g) \otimes D(g)\right] \Omega^\dagger$ satisfies $\mathcal{D}(g)\mathcal{D}(h)=\omega(g,h)\mathcal{D}(gh)$ $\forall g,h\in G_0$.
In the presence of TRS, one can separate the solution into $\omega(t,t)=\pm1$ and the unitary part by an appropriate gauge fixing, with the latter further constrained by $\omega(g,h)\omega(\tilde{g},\tilde{h})=1$ $\forall g,h\in G_0$ on top of the (unitary) cocycle condition \cite{yang2017irreducible}. If $\omega(t,t)=-1$, one should further apply $\Upsilon$ gate (\ref{Ups}) to two 2-dimensional subsystems $\sigma_L,\sigma_R=\pm1$ in $L=L'\sigma_L$ and $R=\sigma_R R'$ of $\Omega\ket{\Psi}$ to fractionalize the TRS. See Fig.~\ref{fig:setup}(c) for an illustration for the most general case.

In summary, $\rho_a$ obtained from $G$-symmetric random states with possible symmetry fractionalization is determined by 
\begin{equation}
w_{ab}=\frac{1}{\sqrt{|G_0|}}\omega(g_r,g_r^{-1}g_l)c_{L,g_r^{-1}g_l,R},
\label{wab}
\end{equation}
where $\omega$ is a unitary 2-cocycle on $G_0$. In the presence of TRS, we further have $c_{L,\tilde g,R}=\overline{c_{L,g,R}}$ ($c_{L',\tilde g,R'}=\sigma_y\overline{c_{L',g,R'}}\sigma_y$ in the $2$-by-$2$ block representation of $c_{L,g,R}=[c_{L',g,R'}]_{\sigma_L,\sigma_R}$) if $\omega(t,t)=1$ ($\omega(t,t)=-1$), and $\omega(g,h)\omega(\tilde g,\tilde h)=1$.

\emph{Universal decomposition into the threefold way.---}
It remains to understand the eigenvalue statistics of $G$-symmetric $\rho_a$, or equivalently, the singular value statistics of Eq.~(\ref{wab}). This turns out to be systematically solvable using the (generalized) group representation theory \cite{yang2017irreducible}. A fundamental fact is that any (projective) representation $\mathcal{D}(g)$ can be decomposed into $\bigoplus_\alpha \mathcal{D}^\alpha(g)$, a direct sum of irreducible representations (irreps). For each irrep $\alpha$, one can define an \emph{indicator} \cite{kawanaka1990twisted,bradley2009mathematical}
\begin{equation}
\label{indmt}
    \iota_\alpha=\frac{1}{|G_0|}\sum_{g\in G_0} \omega(\tilde{g},g)\chi_\alpha(\tilde{g}g),
\end{equation}
where $\chi_\alpha(g)=\Tr \mathcal{D}^\alpha(g)$. In particular, $d_\alpha=\chi_\alpha(e)$ is the dimension of irrep $\alpha$. This indicator (\ref{indmt}) takes values on $0,1,-1$, corresponding to three different possibilities about how $\overline{\mathcal{D}^\alpha(\tilde g)}$ is related to $\mathcal{D}^\alpha(g)$. If $\iota_\alpha=0$, they are different irreps. Otherwise, they are equivalent irreps and one can make 
$\overline{\mathcal{D}^\alpha(\tilde g)}=\mathcal{D}^\alpha(g)$ ($(\sigma_y\otimes\mathbb{1}_{d_\alpha/2})\overline{\mathcal{D}^\alpha(\tilde g)}(\sigma_y\otimes\mathbb{1}_{d_\alpha/2})=\mathcal{D}^\alpha(g)$) if $\iota_\alpha=1$ ($\iota_\alpha=-1$).
Having these preliminaries in mind, we introduce the following theorem \cite{SM}:
\begin{theorem}
\label{thm:espectrum}
    The matrix ensemble given in Eq.~(\ref{wab}) is completely decomposed into the direct sum of the threefold way.
    If $G=G_0$, the matrix ensemble of $WW^\dag$ is
    \begin{equation}
        \bigoplus_{\alpha}
        \frac{\mathbb{1}_{d_\alpha}}{d_\alpha} \otimes \mathrm{LUE}^{d_Ld_\alpha\times d_Rd_\alpha}_\alpha. \label{eq:onefold}
    \end{equation}
    Otherwise $G=G_0\rtimes \mathbb{Z}_2^\mathcal{T}$, the ensemble of $WW^\dag$ is 
    \begin{gather}
        \left[\bigoplus_{\alpha:R_{+}}\frac{\mathbb{1}_{d_\alpha}}{d_\alpha} \otimes \mathrm{LOE}_\alpha^{d_Ld_\alpha\times d_Rd_\alpha}\right]\nonumber\\
        \oplus\nonumber\\
        \left[\bigoplus_{\alpha:R_0}\frac{\mathbb{1}_{d_\alpha}}{d_\alpha} \otimes \left(\mathrm{LUE}_\alpha^{d_Ld_\alpha\times d_Rd_\alpha}\oplus \overline{\mathrm{LUE}_\alpha^{d_Ld_\alpha\times d_Rd_\alpha}}\right)\right]\nonumber\\
        \oplus\nonumber\\
        \left[\bigoplus_{\alpha:R_{-}}\frac{\mathbb{1}_{d_\alpha}}{d_\alpha} \otimes \mathrm{LSE}_\alpha^{d_Ld_\alpha\times d_Rd_\alpha}\right], \label{eq:threefold}
    \end{gather}
    where $R_\pm$ is the set of (projective) irreps $\mathcal{D}^\alpha$ satisfying $\iota_\alpha=\pm\omega(t,t)$, 
    while $R_0$ includes only one components ($\alpha$) of the involution pairs ($\{\alpha,\alpha^\star\}$) with $\iota_\alpha=0$. Here the subscript $\alpha$ labels an irrep of $G_0$ with dimension $d_\alpha$.
    The superscript $d_Ld_\alpha\times d_Rd_\alpha$ denotes the size of the matrix block, and the blocks labeled by different $\alpha$ are independent.
\end{theorem}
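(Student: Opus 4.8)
\emph{Sketch of proof.---} The plan is to block-diagonalize $W$ with the (projective) symmetry, read off the Gaussian law of each resulting block, and then let time reversal sort the blocks into the three Laguerre classes. First I would upgrade $[\mathbb{1}_L\otimes\mathcal{D}(g),\rho_a]=0$ to a statement about $W$ itself: because $\Omega\ket{\Psi}$ (and, when $\omega(t,t)=-1$, $\Upsilon\Omega\ket{\Psi}$) is invariant under $\mathbb{1}_L\otimes\mathcal{D}(g)\otimes\overline{\mathcal{D}(g)}\otimes\mathbb{1}_R$, the matrix $W$ intertwines the projective $G_0$-representations carried by $a$ and by $b$. Decomposing the twisted left-regular representation on the $|G_0|$-dimensional spin $l$ as $\bigoplus_\alpha\mathcal{D}^\alpha\otimes\mathbb{1}_{d_\alpha}$ (each projective irrep appearing with multiplicity $d_\alpha$, so $\sum_\alpha d_\alpha^2=|G_0|$), the projective Schur lemma forces $W=\bigoplus_\alpha\mathbb{1}_{d_\alpha}\otimes W_\alpha$, where $W_\alpha$ is a $(d_Ld_\alpha)\times(d_Rd_\alpha)$ matrix acting on $\mathcal{H}_L$ (resp.\ $\mathcal{H}_R$) tensored with the $d_\alpha$-dimensional multiplicity space. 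Hence $\rho_a=WW^\dagger=\bigoplus_\alpha\mathbb{1}_{d_\alpha}\otimes(W_\alpha W_\alpha^\dagger)$, which is already responsible for the trivial $d_\alpha$-fold degeneracy encoded by the prefactor $\mathbb{1}_{d_\alpha}/d_\alpha$.

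Next I would determine the distribution of the entries of $W_\alpha$. Substituting Eq.~(\ref{wab}) and carrying out the isotypic projection, one finds that for each fixed $(L,R)$ the entries of $W_\alpha$ are, up to a fixed unitary change of basis, $|G_0|^{-1/2}\sum_{g\in G_0}c_{L,g,R}\,\mathcal{D}^\alpha(g)_{ij}$. The Schur--Weyl orthogonality relations for projective irreps then give $\mathbb{E}\!\left[(W_\alpha)_{(L,i),(R,j)}\,\overline{(W_{\alpha'})_{(L',i'),(R',j')}}\right]=d_\alpha^{-1}\,\delta_{\alpha\alpha'}\delta_{LL'}\delta_{RR'}\delta_{ii'}\delta_{jj'}$, so for $G=G_0$ all blocks are mutually independent and each $W_\alpha$ is a standard complex Gaussian matrix rescaled by $d_\alpha^{-1/2}$; consequently $W_\alpha W_\alpha^\dagger=d_\alpha^{-1}\,\mathrm{LUE}^{d_Ld_\alpha\times d_Rd_\alpha}$, which is precisely Eq.~(\ref{eq:onefold}) (the final trace normalization for physical unit-norm states being done as in the Supplemental Material~\cite{SM}).

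For $G=G_0\rtimes\mathbb{Z}_2^\mathcal{T}$ I would feed in the antiunitary data. The fractionalized $\mathcal{T}_a$, with $\mathcal{T}_a^2=\omega(t,t)\,\mathbb{1}_a$ (the $\Upsilon$ gate of Eq.~(\ref{Ups}) entering the construction when $\omega(t,t)=-1$), imposes on $\{c_{L,g,R}\}$ the reality conditions stated after Eq.~(\ref{wab}) and hence relates $W$ to $\overline{W}$. Intersecting this with the isotypic decomposition, the twisted Frobenius--Schur indicator $\iota_\alpha$ of Eq.~(\ref{indmt}) dictates the fate of block $\alpha$. If $\iota_\alpha=0$, $\mathcal{T}_a$ carries the $\alpha$-block to an inequivalent block $\alpha^\star$, and covariance identifies $W_{\alpha^\star}$ with the suitably phased complex conjugate of $W_\alpha$, so this deterministically conjugate pair contributes $\mathrm{LUE}_\alpha\oplus\overline{\mathrm{LUE}_\alpha}$. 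If $\iota_\alpha\neq0$, composing the intertwiner that realizes $\overline{\mathcal{D}^\alpha(\tilde g)}\cong\mathcal{D}^\alpha(g)$ with $\mathcal{T}_a$ yields an antiunitary $\Theta_\alpha$ on $\mathcal{H}_L\otimes\mathbb{C}^{d_\alpha}$ obeying $\Theta_\alpha^2=\iota_\alpha\,\omega(t,t)$ (the factor $\iota_\alpha$ coming from the intertwiner and $\omega(t,t)$ from $\mathcal{T}_a^2$), together with a partner on the $b$ side, under which $W_\alpha$ is covariant. When $\Theta_\alpha^2=+1$, i.e.\ $\iota_\alpha=\omega(t,t)$ so $\alpha\in R_+$, one may pick a $\Theta_\alpha$-real basis in which $W_\alpha$ has real Gaussian entries, giving an LOE block; when $\Theta_\alpha^2=-1$, i.e.\ $\iota_\alpha=-\omega(t,t)$ so $\alpha\in R_-$, a quaternionic basis turns $W_\alpha$ into a quaternion Gaussian matrix, giving an LSE block. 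In each case one re-runs the (now reality/quaternion-constrained) orthogonality computation to confirm the surviving entries remain i.i.d.\ Gaussians of the appropriate class, and summing over $\alpha$ produces Eq.~(\ref{eq:threefold}).

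The hard part I anticipate is the time-reversal bookkeeping in the last step: establishing the clean identity $\Theta_\alpha^2=\iota_\alpha\,\omega(t,t)$ from the combined data of the global Kramers sign, the fractionalization value $\omega(t,t)$, and the twisted-indicator intertwiner, and then lifting the reality/quaternionic structure from $\rho_a$ down to the Gaussian matrix $W_\alpha$ itself, since it is the law of $W_\alpha$ and not of $\rho_a$ that selects the Laguerre class. Subsidiary points requiring care are fixing the exact phase relating $W_\alpha$ and $W_{\alpha^\star}$ when $\iota_\alpha=0$, checking that the multiplicity spaces carrying the antiunitary involutions are compatible with the factorizations $\mathcal{H}_L\otimes\mathbb{C}^{d_\alpha}$ and $\mathcal{H}_R\otimes\mathbb{C}^{d_\alpha}$, and verifying the reality-constrained orthogonality identities. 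All of these are handled by the generalized (projective, $\tilde\cdot$-twisted) representation theory of Refs.~\cite{kawanaka1990twisted,yang2017irreducible}, which is why the full argument belongs in the Supplemental Material.
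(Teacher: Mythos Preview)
Your proposal is correct and follows essentially the same route as the paper: block-diagonalize $W$ via the projective Peter--Weyl decomposition of the twisted regular representation (yielding $W=\bigoplus_\alpha\mathbb{1}_{d_\alpha}\otimes W_\alpha$ with $(W_\alpha)_{L,j;R,j'}=|G_0|^{-1/2}\sum_g c_{L,g,R}\mathcal{D}^\alpha_{j'j}(g)$), then read off the Gaussian covariances from the Schur orthogonality relations, and finally use the twisted Frobenius--Schur indicator to sort blocks into LOE/LUE/LSE. The one organizational difference is in the antiunitary bookkeeping: the paper treats the cases $\omega(t,t)=\pm1$ separately, first computing $\mathbb{E}[W_\alpha W_{\alpha'}]$ in the absence of $\Upsilon$ (finding LOE, LSE, or conjugate LUE pairs according to $\iota_\alpha=1,-1,0$) and then explicitly tracking how $\Upsilon$ swaps LOE$\leftrightarrow$LSE via $2\times2$ and $4\times4$ block calculations; you instead package both cases into the single clean statement $\Theta_\alpha^2=\iota_\alpha\,\omega(t,t)$, which is more conceptual and avoids the explicit $\Upsilon$-block arithmetic at the cost of leaving the verification of that identity (the step you correctly flag as the hard part) less concrete.
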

We emphasize that, given $G$ and $d_{L,R}$, this decomposition depends only on the cohomology class of the cocycle $\omega$. 
Remarkably, there are only three building blocks, i.e., LOE, LUE, and LSE, which parallel with the Gaussian ensembles in the Hamiltonian counterpart \cite{dyson1962threefold}.
The block decomposition in Eqs.~(\ref{eq:onefold}) and (\ref{eq:threefold}) is reminiscent of symmetry-resolved entanglement \cite{Goldstein2018,Azses2020,Horvath2020,Kusuki2023,Saura2024categorical}. 
In this context, our result provides a possible generalization to include anomalous symmetries, and makes a systematic connection to RMT.
Note that symmetry fractionalziation by $\Omega$ and/or $\Upsilon$ causes a change on the dimensions and types of irreps in general.
Actually there is a limitation on the change of matrix ensemble \cite{SM}:
In the absence (presence) of $\Upsilon$, the difference of the number of LOEs blocks and that of LSEs blocks, including degeneracy, never increases (decreases) upon the action of $\Omega$. 

\begin{table}[tbp]
     \centering
     \caption{Matrix ensembles for $G_0=\mathbb{Z}_2$. Blocks labeled by different subscripts are statistically independent.}
     \label{tbl:Z2}
     \begin{tabular}{c|c|c|c|c}
         \;\;$\mathbb{Z}_2^{\mathcal{T}}$\;\; & \;$\omega(p,p)$\; & \;$\omega(t,t)$\; & \;\;\;\;\;\;\;\;Ensemble\;\;\;\;\;\;\;\; & \;Degeneracy\;
         \\
         \hline \hline
         $\times$ & N/A & N/A & $\mathrm{LUE}_1\oplus \mathrm{LUE}_2$ & 1 \\ 
         $\checkmark$ & $+$ & $+$ & $\mathrm{LOE}_1\oplus \mathrm{LOE}_2$ & 1 \\ 
         $\checkmark$ & $-$ & $+$ & $\mathrm{LUE}\oplus \overline{\mathrm{LUE}}$ & 2 \\ 
         $\checkmark$ & $+$ & $-$ & $\mathrm{LSE}_1\oplus \mathrm{LSE}_2$ & 2\\
         $\checkmark$ & $-$ & $-$ & $\mathrm{LUE}\oplus \overline{\mathrm{LUE}}$ & 2 \\ 
     \end{tabular}
\end{table}

\emph{Examples.---}Clearly LUE, LOE, LSE are repdocuded by taking $G_0=\{e\}$ and $\omega(t,t)=\pm1$ (if TRS is imposed). The next minimal example is $G_0=\mathbb{Z}_2=\{e,p\}$. Without TRS, symmetry fractionalization never occurs and the matrix ensemble is a direct sum of two independent LUEs. In the presence of TRS, taking $g=h=p$ in $\omega(g,h)\omega(\tilde g,\tilde h)=1$, we obtain inequivalent cocyles $\omega(p,p)=\pm1$ on top of $\omega(t,t)=\pm1$ \cite{yang2017irreducible}. The ensemble turns out to be a pair of complex conjugate LUEs whenever $\omega(p,p)=-1$, and otherwise two independent LOEs ($\omega(t,t)=1$) or LSEs ($\omega(t,t)=-1$) if $\omega(p,p)=1$. See Table~\ref{tbl:Z2} for a summary of the results.

Note that entanglement-spectrum degeneracy in the above example arises from symmetry fractionalization. 
On the other hand, degeneracy appears for any non-Abelian group even in the absence of symmetry fractionalization. The simplest non-Abelian group is $G=C_{3v}$ generated by $\frac{2\pi}{3}$ rotation and mirror operation. It has three linear irreps: two of them are 1-dimensional, and the rest one is 2-dimensional. Hence, without TRS, the matrix ensemble reads 
\begin{equation}
    \mathrm{LUE}_1^{d_L\times d_R}\;\oplus\; \mathrm{LUE}_2^{d_L\times d_R}\;\oplus\; \frac{\mathbb{1}_2}{2}\otimes\mathrm{LUE}_3^{2d_L\times 2d_R}.
\end{equation}
One can also obtain LSE blocks without fractionalizing the TRS (i.e., $\omega(t,t)=1$), as exemplified by $G=Q_8\times\mathbb{Z}_2^{\mathcal{T}}$. Here $G_0=Q_8$ is the quaternion group, which has four 1-dimensional irreps and one 2-dimensional irrep. The former irreps have indicator $\iota=1$ (cf. Eq.~(\ref{indmt})). The latter irrep is the spin representation of quaternion and has indicator $\iota=-1$. The matrix ensemble is thus given by 
\begin{equation}
    \left(\bigoplus_{\alpha=1}^4\mathrm{LOE}_\alpha^{d_L\times d_R}\right)\;\oplus\; \frac{\mathbb{1}_2}{2}\otimes \mathrm{LSE}
    ^{2d_L\times 2d_R}.
\end{equation}

\emph{Summary and outlook.---}
By incorporating symmetry fractionalization into RMT, we succeeded in constructing the LSE entanglement spectrum. 
We figured out the explicit forms of random matrices describing the entanglement spectra of symmetric random states. Moreover, we showed the matrix ensemble can always be decomposed into a direct sum of LUE, LOE, and/or LSE. 
Our results can be interpreted as the Laguerre version of the Dyson's threefold way \cite{dyson1962threefold}. 

Finally, we would like to discuss some future prospects. 
In our work, only $0$-form invertible symmetry described by finite groups are considered.
It is natural to ask how our conclusions may be changed by continuous symmetries described by Lie groups, and more advanced higher-form and/or non-invertible symmetries \cite{gaiotto2015generalized,mcgreevy2023generalized,choi2023remarks,ji2020categorical}. 
Another direction is to consider entanglement under superselection rules \cite{schuch2004nonlocal,schuch2004quantum,you2017sachdev-ye-kitaev}, such as fermionic systems with definite fermion-number parities \cite{MariCarmen2007}. In fermionic systems, the anti-unitary TRS is no longer involutory \cite{Fidkowski10,Fidkowski11,yang2017irreducible}, and there could be more different types of fundamental random matrix ensembles \cite{altland1997nonstandard,kawabata2024symmetry}. 

We are grateful to Lucas Hackl, Haruki Watanabe, and Takahiro Morimoto for valuable discussions. 
H.Y. acknowledges support from FoPM, WINGS Program, the University of Tokyo. 
K.M. was supported by JSPS KAKENHI Grant. No. JP23K13037. 
Z.G. acknowledges support from the University of Tokyo Excellent Young Researcher Program and from JST ERATO Grant Number JPMJER2302, Japan.

\bibliography{cite}

\clearpage
\begin{center}
\textbf{\large Supplemental Materials}
\end{center}
\setcounter{equation}{0}
\setcounter{figure}{0}
\setcounter{table}{0}
\setcounter{theorem}{0}
\makeatletter
\renewcommand{\theequation}{S\arabic{equation}}
\renewcommand{\thefigure}{S\arabic{figure}}
\renewcommand{\bibnumfmt}[1]{[S#1]}
\renewcommand{\thetheorem}{S\arabic{theorem}}

We review the basics of projective representations of finite groups. 
We then provide the details on the proof of Theorem~\ref{thm:espectrum} in the main text, including some useful lemmas.

\section{Definitions}

In order to make this material self-contained, let us briefly review the definitions. 
Suppose a finite group $G$ describes the symmetry of the system.
We consider the case $G=G_0$ or the case $G=G_0\rtimes \mathbb{Z}_2^\mathcal{T}$.
Every group element $g \in G_0$ is represented by an unitary matrix. 
In the presence of TRS $\mathbb{Z}_2^\mathcal{T}=\{e,t\}$, the representation of $gt$ is anti-unitary $\forall g\in G_0$. 
Note that the transformation by $t \in \mathbb{Z}_2^\mathcal{T}$ is an involutory automorphism on $G_0$, namely $t(\cdot)t :G_0 \to G_0$, then it is convenient to define $\forall g \in G_0, t g t=:\tilde{g} \in G_0$. 
The Hilbert space of a $G_0$-spin is spanned by $\{|g\rangle\}_{g\in G_0}$. Fixing the action of $t$ to $\ket{e}$ to satisfy $D(t)\ket{e}=\ket{e}$, we get $D(t)\ket{g}=\ket{\tilde{g}}$ from $D(t)D(g)D(t)=D(\tilde{g})$, where $D(g)$ ($g\in G_0$) is assumed to be the regular representation. 
Concretely, we have
\begin{equation}
\begin{split}
&D(g)=\sum_{h\in G_0} |gh\rangle\langle h|,\;\;\forall g\in G_0,\\
&D(t)=\sum_{g\in G_0} |\tilde g\rangle\langle g| K,
\end{split}
\end{equation}
where $K$ denotes the complex conjugate operation under the basis $\{|g\rangle\}_{g\in G_0}$.

\subsection{Projective representations of groups}

In this section, we explain projective representation of groups, theorems related to that, 
and its application to our results.
A projective representation of a group $G$ with 2-cocycle $\omega:G\times G\to U(1)$ is a set of unitary and anti-unitary operators $\{\mathcal{D}(g)\}_{g\in G}$ which satisfy
\begin{equation}
    \mathcal{D}(g)\mathcal{D}(h)=\omega(g,h)\mathcal{D}(gh).
\end{equation}
A representation of $\omega=1$ is called a linear (anti-linear) representation. 
From the request of associativity, 2-cocycle $\omega$ satisfies $\forall g,g',g''$,
\begin{equation}
\omega(g,g')\omega(gg',g'')=\omega(g,g'g'')\omega^g(g',g''), 
\end{equation}
where $\omega^g=\omega$ if the representation of $g$ is unitary, otherwise $\omega^g=\overline{\omega}=\omega^{-1}$.

Redefining the representation with a phase modulation $\beta$: $G\to U(1)$ via $\mathcal{D}'(g)=\beta(g)\mathcal{D}(g)$ produces a new 2-cocycle 
\begin{equation}
\label{eq:coboundary}
    \omega'(g,h)=\frac{\beta(gh)}{\beta(g)\beta^g(h)}\omega(g,h),
\end{equation}
where $\beta^g=\beta$ if the representation of $g$ is unitary, otherwise $\beta^g=\overline{\beta}=\beta^{-1}$.
The redefined projective representation satisfies $\mathcal{D}'(g)\mathcal{D}'(h)=\omega'(g,h)\mathcal{D}'(gh)$. 
This phase modulation (coboundary) induces equivalence classes of $\omega$, and they are classified by the second-order group cohomology $H^2(G,U(1))$ \cite{chen2013symmetry}. These equivalence classes are called cohomology classes.  

Remember that we focus on the case $G=G_0\rtimes \mathbb{Z}_2^\mathcal{T}$ for the anti-unitary case. This assumption allows us to decouple $G_0$ and $\mathbb{Z}_2^\mathcal{T}$ parts in $\omega$. We can fix the gauge that separates the cocycles into $G_0$ elements and the $\mathbb{Z}_2^\mathcal{T}$ parts \cite{yang2017irreducible}:
\begin{equation}
\label{eq:decouplings}
\begin{split}
    \omega(e,g)&=\omega(g,e)=1,\\
    \omega(t,t)&=\pm 1,\\
    \omega(g,t)&=\omega(t,g)=1,\\
    \omega(gt,h)&=\omega(g,\tilde{h}),
    \\
    \omega(g,ht)&=\omega(g,h),\\
    \omega(gt,ht)&=\omega(g,\tilde h)\omega(t,t),
\end{split}
\end{equation}
where $g,h\in G_0$, $\tilde{g}=tgt\in G_0$. Here $\omega(g,h)$ with $g,h\in G_0$ is a unitary $2$-cocycle subject to additional constraint $\omega(g,h)\omega(\tilde g,\tilde h)=1$. Also, a phase modulation under the above gauge fixing is constrainted by $\beta(e)=1$, $\beta(gt)=\beta(g)\beta(t)$ and $\beta(g)\beta(\tilde g)=1$. 
Similarly, any unitary and antiunitary parts of $\mathcal{D}(g)$ can be considered separately. From definition, any $g\in G\setminus G_0$ can be represented by $\mathcal{D}(g)=\mathcal{D}(g_0)D(t)$.

It should be noted that nontrivial cohomology classes can arise from the interplay between $G_0$ and TRS. 
For example, $G_0=\mathbb{Z}_2$ itself has only one (trivial) cohomology class. However, if TRS is assumed, $G=G_0\times \mathbb{Z}_2^\mathcal{T}$ has four cohomology classes \cite{yang2017irreducible}, even though $\mathbb{Z}_2^\mathcal{T}$ has only 2 cohomology classes. As mentioned in the main text, the new nontrivial class arises from the inequivalence between $\omega(p,p)=\pm1$, which arises from taking $g=h=p$ in $\omega(g,h)\omega(\tilde g,\tilde h)=1$. 
This 2-cocycle itself is always trivial without TRS. 

On the other hand, it is also possible that the group cohomology of $G$ is trivial despite that of $G_0$ is nontrivial. A simple example is $G=G_0\times \mathbb{Z}_2^\mathcal{T}$ with $G_0=\mathbb{Z}_3\times\mathbb{Z}_3$ (though the situation changes for $G_0\rtimes\mathbb{Z}_2^\mathcal{T}$ \cite{yang2017irreducible}). Suppose $\omega$ is a nontrivial $2$-cocycle of $G_0$, we know that $\omega^3$ is trivial since $H^2(G_0,U(1))=\mathbb{Z}_3$. However, we also have $\omega^2=1$ due to $\omega(g,h)\omega(\tilde g,\tilde h)=1$ and $\tilde g=g$. This implies $\omega=\omega^3$ is trivial as well.
Note that here the triviality is in the sense of $H^2(G_0,U(1))$. It remains unclear whether there exists a cocycle that was trivial without TRS but becomes nontrivial with TRS. 
It turns out there is no such a cocycle, as shown below: 
Let $\forall g,h\in G_0$, then $g^3=h^3=e$ and $gh=hg$ are satisfied. 
First we check $\omega(g,g)$ can be gauged to be 1 under the constraint of phase modulations: $\beta(e)=1,\,\beta(gt)=\beta(g)\beta(t),$ and $\beta(g)\beta(\tilde{g})=1$. The last constraint becomes $\beta(g)^2=1$ because now we consider $G=G_0\times\mathbb{Z}_2^\mathcal{T}$. From Eq.~(\ref{eq:coboundary}), a possible phase modulation of $\omega(g,g)$ is
\begin{equation}
    \omega'(g,g)=\frac{\beta(g^2)}{\beta(g)^2}\omega(g,g)=\beta(g^2)\omega(g,g).
\end{equation}
Since $\omega(g,g)$ and $\beta(g^2)=\beta(g^{-1})$ have the same degree of freedom $|G_0|$, we can always fix $\beta(g^2)=\omega^{-1}(g,g)$. Thus, $\forall g\in G_0, \omega(g,g)$ can be gauge transformed into $\omega(g,g)=1$.
Next we check $\omega(g,h)=1$ also holds for $\forall g,h\in G_0$. Since $\omega(g,h)$ is trivial without TRS, $\omega(g,h)$ satisfies the coboundary condition $\omega(g,h)=\frac{\gamma(gh)}{\gamma(g)\gamma(h)}$ for some $\gamma:G\to U(1)$. Using $\omega(g^2,g^2)=\omega(h^2,h^2)=\omega((gh)^2,(gh)^2)=1$ and $\omega(g^2,h^2)^2=1$, we get 
\begin{equation}
\begin{split}
    \omega(g,h)&=\frac{\omega(g,h)}{\omega(g^2,h^2)^2}\\
    &=\frac{\gamma(gh)}{\gamma(g)\gamma(h)}\frac{\gamma(g^2)^2\gamma(h^2)^2}{\gamma((gh)^2)^2}\\
    &=\frac{\gamma(gh)}{\gamma((gh)^2)^2}\frac{\gamma(g^2)^2}{\gamma(g)}\frac{\gamma(h^2)^2}{\gamma(h)}\\
    &=\frac{\omega((gh)^2,(gh)^2)}{\omega(g^2,g^2)\omega(h^2,h^2)}\\
    &=\frac{1}{1\cdot 1}=1.
\end{split}
\end{equation}
This result suggests that any cocycle of $G_0=\mathbb{Z}_3\times\mathbb{Z}_3$ which is trivial without TRS is also trivial with TRS. Therefore, $G=G_0\times\mathbb{Z}_2^\mathcal{T}$ has no nontrivial cohomology classes which were trivial without $\mathbb{Z}_2^\mathcal{T}$. Thus $H^2(G,U_{\mathcal{T}}(1))=\mathbb{Z}_2$ is proven. We expect the above proof can be generalized to any Abelian group with odd $|G_0|$.

\subsection{Theorems}

Some theorems similar to the linear representation cases hold even if the representation is projective.
\begin{theorem}
\label{thm:orthogonality}
    Let $\mathcal{D}^\alpha$ be a unitary projective irrep of a group $G_0$ (not $G$). The grand orthogonality theorem
    \begin{equation}
    \label{eq:orthogonality}
        \sum_{g\in G_0}\mathcal{D}^\alpha_{ij}(g)\overline{\mathcal{D}^\beta_{kl}(g)}=\frac{|G_0|}{d_\alpha}\delta_{\alpha,\beta}\delta_{i,k}\delta_{j,l}
    \end{equation}
    and the dimension equation
    \begin{equation}
    \label{eq:dimensioneq}
        \sum_\alpha d_\alpha^2=|G_0|
    \end{equation}
    hold as in the case of linear representations.
\end{theorem}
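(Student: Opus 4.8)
The plan is to reproduce the classical derivation of Schur orthogonality in the projective setting, keeping track of the cocycle phases (which will turn out to cancel automatically). First I would record the projective Schur's lemma. Since $G_0$ is finite and every $\mathcal{D}^\alpha(g)$ is unitary, any projective representation with a fixed cocycle $\omega$ is completely reducible, because the orthogonal complement of an invariant subspace is again invariant --- a notion blind to scalar prefactors. Hence, if $M$ intertwines two such irreps, i.e. $\mathcal{D}^\alpha(g)M=M\mathcal{D}^\beta(g)$ for all $g\in G_0$, then $\ker M$ and $\operatorname{im}M$ are invariant, which forces $M=0$ when $\mathcal{D}^\alpha\not\simeq\mathcal{D}^\beta$ and $M=\lambda\mathbb{1}$ when $\mathcal{D}^\alpha=\mathcal{D}^\beta$ (subtract an eigenvalue of $M$ to obtain a singular intertwiner).

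Next I would produce intertwiners by group averaging: for an arbitrary rectangular matrix $X$ set $M_X:=\sum_{g\in G_0}\mathcal{D}^\alpha(g)\,X\,\mathcal{D}^\beta(g)^{-1}$, with $\mathcal{D}^\beta(g)^{-1}=\mathcal{D}^\beta(g)^\dagger$ by unitarity. One verifies $\mathcal{D}^\alpha(h)M_X\mathcal{D}^\beta(h)^{-1}=M_X$ by a one-line computation: $\mathcal{D}^\alpha(h)\mathcal{D}^\alpha(g)=\omega(h,g)\mathcal{D}^\alpha(hg)$, while $\mathcal{D}^\beta(g)^{-1}\mathcal{D}^\beta(h)^{-1}=\bigl(\mathcal{D}^\beta(h)\mathcal{D}^\beta(g)\bigr)^{-1}=\omega(h,g)^{-1}\mathcal{D}^\beta(hg)^{-1}$, so the cocycle factors cancel and relabelling $g\mapsto hg$ returns $M_X$. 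Thus $M_X$ is an intertwiner, and Schur's lemma gives $M_X=0$ for inequivalent irreps and $M_X=\lambda(X)\mathbb{1}$ for $\mathcal{D}^\alpha=\mathcal{D}^\beta$; taking the trace and using $\Tr M_X=|G_0|\Tr X$ fixes $\lambda(X)=(|G_0|/d_\alpha)\Tr X$. Choosing $X$ to be the matrix unit with a single $1$ in position $(j,l)$, reading off the $(i,k)$ matrix element, and turning the second factor into a complex conjugate via $\mathcal{D}^\beta(g)^{-1}=\mathcal{D}^\beta(g)^\dagger$, one obtains Eq.~(\ref{eq:orthogonality}) (after fixing one unitary representative per equivalence class, so that $\delta_{\alpha\beta}$ is well defined).

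For the dimension equation I would pass to the twisted regular representation $\mathcal{D}^{\mathrm{reg}}(g)\ket{h}:=\omega(g,h)\ket{gh}$ on $\mathbb{C}[G_0]$; the $2$-cocycle identity is exactly what makes this a bona fide $\omega$-projective representation, and it is unitary, hence completely reducible into $\omega$-irreps: $\mathcal{D}^{\mathrm{reg}}\simeq\bigoplus_\alpha m_\alpha\mathcal{D}^\alpha$. Normalizing the cocycle so that $\omega(e,\cdot)=\omega(\cdot,e)=1$ (a harmless gauge choice that also gives $\mathcal{D}^\alpha(e)=\mathbb{1}$), its character is $\chi^{\mathrm{reg}}(g)=|G_0|\,\delta_{g,e}$, since $gh=h$ only for $g=e$. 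Tracing Eq.~(\ref{eq:orthogonality}) yields the twisted character orthogonality $|G_0|^{-1}\sum_{g}\chi_\alpha(g)\overline{\chi_\beta(g)}=\delta_{\alpha\beta}$, whence $m_\alpha=|G_0|^{-1}\sum_{g}\chi^{\mathrm{reg}}(g)\overline{\chi_\alpha(g)}=\overline{\chi_\alpha(e)}=d_\alpha$. Comparing dimensions gives $|G_0|=\sum_\alpha m_\alpha d_\alpha=\sum_\alpha d_\alpha^2$, i.e. Eq.~(\ref{eq:dimensioneq}); as a byproduct every $\omega$-irrep appears in $\mathcal{D}^{\mathrm{reg}}$, so the family $\{\mathcal{D}^\alpha\}$ is finite and exhausts all $\omega$-irreps.

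None of this is deep, but the step that genuinely needs input beyond copying the linear proof is the dimension equation: orthogonality alone only yields $\sum_\alpha d_\alpha^2\le|G_0|$, and promoting it to an equality requires the decomposition of the twisted regular representation --- equivalently, the statement that the rescaled matrix coefficients $\sqrt{d_\alpha/|G_0|}\,\mathcal{D}^\alpha_{ij}$ form an orthonormal \emph{basis}, not merely an orthonormal set, of $\ell^2(G_0)$. That completeness is the only point where I would be careful; everything else is the standard averaging argument with the cocycle phases tracked.
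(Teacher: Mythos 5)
Your proof is correct and follows essentially the same route as the paper: an averaged intertwiner $\sum_g\mathcal{D}^\alpha(g)X\mathcal{D}^\beta(g)^{-1}$ (the paper writes $\mathcal{D}^\beta(g)^{-1}$ as $\omega(g^{-1},g)^{-1}\mathcal{D}^\beta(g^{-1})$, which is the same operator) combined with the projective Schur lemma for the orthogonality relation, and the character of the $\omega$-twisted regular representation for $\sum_\alpha d_\alpha^2=|G_0|$. The only cosmetic differences are that you prove Schur's lemma rather than citing it and fix the constant by tracing $M_X$ directly instead of summing the diagonal index.
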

\begin{proof}
    We make use of the generalized Schur's lemma \cite{yang2017irreducible}:
    \begin{gather}
        \forall g\in G_0, \, \mathcal{D}^\alpha(g)X=X\mathcal{D}^\beta(g)\nonumber\\
        \Rightarrow 
        \left\{ \,
            \begin{aligned}
            & X\propto \mathbb{1}_{d_\alpha}\, & (\alpha=\beta) \\
            & X=0\, & (\alpha\neq\beta)
            \end{aligned}
        \right. .
        \label{Schur}
    \end{gather}
    An example of $X$ is
    \begin{equation}
        X=\sum_{h\in G_0}\frac{1}{\omega(h^{-1},h)}\mathcal{D}^\alpha(h)A\mathcal{D}^\beta(h^{-1}),
    \end{equation}
    where $A$ is an arbitrary $d_\alpha \times d_\beta$ matrix. To see this, we only have to check that $\forall g\in G_0$,
    \begin{equation}
        \begin{split}
            &\mathcal{D}^\alpha(g)X\\
            &=\sum_{h\in G_0}\frac{1}{\omega(h^{-1},h)}\mathcal{D}^\alpha(g)\mathcal{D}^\alpha(h)A\mathcal{D}^\beta(h^{-1})\\
            &=\sum_{h\in G_0}\frac{\omega(g,h)}{\omega(h^{-1},h)}\mathcal{D}^\alpha(gh)A\mathcal{D}^\beta(h^{-1})\\
            &=\sum_{h'\in G_0}\frac{\omega(g,g^{-1}h')}{\omega(h'^{-1}g,g^{-1}h')}\mathcal{D}^\alpha(h')A\mathcal{D}^\beta(h'^{-1}g)\\
            &=\sum_{h'\in G_0}\frac{\omega(g,g^{-1}h')\mathcal{D}^\alpha(h')A\mathcal{D}^\beta(h'^{-1})}{\omega(h'^{-1}g,g^{-1}h')\omega(h'^{-1},g)}\mathcal{D}^\beta(g)\\
            &=\sum_{h'\in G_0}\frac{\omega(g,g^{-1}h')\mathcal{D}^\alpha(h')A\mathcal{D}^\beta(h'^{-1})}{\omega(h'^{-1},h')\omega(g,g^{-1}h')}\mathcal{D}^\beta(g)\\
            &=X\mathcal{D}^\beta(g).
        \end{split}
    \end{equation}
    Note that $X$ can be rewritten as
    \begin{equation}
        \begin{split}
            X&=\sum_{h\in G_0}\mathcal{D}^\alpha(h)A{\mathcal{D}^\beta}(h)^\dagger.
        \end{split}
    \end{equation}
    Since $A$ is arbitrary, we may choose $A=\ket{j}\bra{l}$. From the generalized Schur's lemma (\ref{Schur}), we have
    \begin{equation}
        X_{ik;jl}=\sum_{h\in G_0}\mathcal{D}_{ij}^\alpha(h)\overline{\mathcal{D}^\beta_{kl}(h)}=\lambda\delta_{\alpha,\beta}\delta_{i,k}.
    \end{equation}
    Here constant $\lambda$ can be obtained by summing $i$ up under the assumption $i=k$ and $\alpha=\beta$, namely
    \begin{equation}
    \begin{split}
        \sum_{i}
        X_{ii;jl}&=\left[\sum_{h\in G_0}{\mathcal{D}^\alpha}(h)^\dagger\mathcal{D}^\alpha(h)\right]_{lj}
        \\
        &=\left[\sum_{h\in G_0}\mathbb{1}_{d_\alpha}\right]_{lj}
        \\
        &=|G_0|\delta_{j,l}.
    \end{split}
    \end{equation}
    On the other hand, $\sum_i 
        X_{ii;jl}=d_\alpha\lambda$. 
    By comparing these two results, we get $\lambda=\frac{|G_0|}{d_\alpha}\delta_{j,l}$, implying Eq.~(\ref{eq:orthogonality}). 

    As for Eq.~(\ref{eq:dimensioneq}), we consider the projective regular representation 
    \begin{equation}
    \label{eq:projregular}
    \mathcal{D}(g)=\sum_{g'}\omega(g,g')\ket{g g'}\bra{g'},
    \end{equation}
    whose character (trace) turns out to be 
    \begin{equation}
        \begin{split}
            \chi_\mathrm{reg}(g)&:=\Tr{\mathcal{D}(g)}\\
            &=\Tr{\sum_{g'\in G_0} \omega(g,g')\ket{gg'}\bra{g'}}\\
            &=\sum_{g'\in G_0}\omega(g,g')\langle g'|gg'\rangle \\
            &=\left\{ \,
            \begin{aligned}
            & |G_0|\, & (g=e) \\
            & 0\, & (g\neq e)
            \end{aligned}
        \right. .
        \end{split}
    \end{equation}
    On the other hand, since $\mathcal{D}(g)$ is reducible, we get
    \begin{equation}
    \label{eq:charactersum}
        \begin{split}
            \Tr{\mathcal{D}}&=\Tr\left[{\bigoplus_\alpha \mathbb{1}_{q_\alpha}\otimes \mathcal{D}^\alpha}\right]\\
            \Rightarrow|G_0|\delta_{g,e}&=\sum_\alpha q_\alpha\chi_\alpha(g).
        \end{split}
    \end{equation}
    From Eq.~(\ref{eq:orthogonality}), 
    we can derive the orthogonality of characters:
    \begin{equation}
    \begin{split}
        &\sum_{g\in G_0}\chi_\alpha(g)\overline{\chi_\beta(g)}\\
        &=\sum_{i,k}\sum_{g\in G_0}\mathcal{D}^\alpha_{ii}(g)\overline{\mathcal{D}^\beta_{kk}(g)}\\
        &=\sum_{i,k}\frac{|G_0|}{d_\alpha}\delta_{\alpha,\beta}\delta_{i,k}\\
        &=|G_0|\delta_{\alpha,\beta}.
    \end{split}
    \end{equation}
    By utilizing this orthogonality, we can identify $q_\alpha$:
    \begin{equation}
        \begin{split}
            \sum_{g\in G_0}\delta_{g,e}\overline{\chi_\beta(g)}&=\frac{1}{|G_0|}\sum_{g\in G_0}\sum_\alpha q_\alpha\chi_\alpha(g)\overline{\chi_\beta(g)}\\
            \Rightarrow d_\beta&=q_\beta\\
        \end{split}
    \end{equation}
    Substituting $g=e$ into Eq.~(\ref{eq:charactersum}), we get Eq.~(\ref{eq:dimensioneq}), and Theorem~\ref{thm:orthogonality} is proven.
\end{proof}

\section{Symmetry concentration}
In the main text, we used the fact that a unitary transformation can concentrate 
the on-site symmetry action $\{D(g)^{\otimes N}\}$ to only one site $\{\mathbb{1}_{|G_0|^{N-1}}\otimes D(g)\}$ (this is applied to both subsystems). 
The feasibility in the unitary case ($G=G_0$) can be understood from the character theory. On the other hand, it is not clear whether this is still true in the presence of TRS, and how one can explicitly execute the concentration procedure.  
We answer these questions in the following. 

Since the concentration can be done step by step, it suffices to identify a two-site unitary transformation $u$ that satisfies $u \left(D(g)\otimes D(g)\right) u^\dagger=\mathbb{1}_{|G_0|}\otimes D(g)$. 
The following $u$ is one suitable choice:
\begin{equation}
    \begin{split}
        u&=U_2 U_1\\
        U_1&=\sum_{g\in G_0}D(g)\otimes \ket{g^{-1}}\bra{g^{-1}}\\
        U_2&=u_2\otimes \mathbb{1}_{|G_0|}\\
        u_2&=\sum_{g\in G_0}\frac{1}{\sqrt{2}}(\omega_8|g\rangle + \omega_8^{-1}|\tilde g\rangle)\langle g|\\
        \omega_8&=e^{i\frac{\pi}{4}}.
    \end{split}
\end{equation}
One can check that $u$ is unitary;
\begin{equation}
\begin{split}
    U_1 U_1^\dag&=\sum_{g,g'\in G_0}D(g)D(g'^{-1})\otimes \ket{g^{-1}}\braket{g^{-1}|g'^{-1}}\bra{g'^{-1}}\\
    &=\sum_{g\in G_0}D(e)\otimes \ket{g^{-1}}\bra{g^{-1}}\\
    &=\mathbb{1}_{|G_0|}\otimes\mathbb{1}_{|G_0|},\\
    U_2 U_2^\dag&=u_2 u_2^\dag \otimes \mathbb{1}_{|G_0|},\\
    u_2 u_2^\dag&=\frac{1}{2}\sum_{g\in G_0}\left(\omega_8 \ket{g}+\omega_8^{-1}\ket{\tilde{g}}\right)\left(\omega_8^{-1} \bra{g}+\omega_8\bra{\tilde{g}}\right)\\
    &=\frac{1}{2}\sum_{g\in G_0}\left(\ket{g}\bra{g}+\ket{\tilde{g}}\bra{\tilde{g}}+i\ket{g}\bra{\tilde{g}}-i\ket{\tilde{g}}\bra{g}\right)\\
    &=\mathbb{1}_{|G_0|},
\end{split}
\end{equation}
and satisfies $u \left(D(g)\otimes D(g)\right) u^\dagger=\mathbb{1}_{|G_0|}\otimes D(g)$. In case of $g\in G_0$, we have
\begin{equation}
    \begin{split}
        &U_1 \left(D(g)\otimes D(g)\right) U_1^\dag\\
        &=\sum_{g',g''\in G_0}D(g'gg''^{-1})\otimes \ket{g'^{-1}}\braket{g'^{-1}|g g''^{-1}}\bra{g''^{-1}}\\
        &=\sum_{g''\in G_0}D(g''g^{-1} g g''^{-1})\otimes \ket{g g''^{-1}}\bra{g''^{-1}}\\
        &=\mathbb{1}_{|G_0|}\otimes \sum_{g''\in G_0}D(g) \ket{g''}\bra{g''}\\
        &=\mathbb{1}_{|G_0|}\otimes D(g),\\
        &U_2\left(\mathbb{1}_{|G_0|}\otimes D(g)\right)U_2^\dag\\
        &=\left(u_2\mathbb{1}_{|G_0|}u_2^\dag\right)\otimes D(g)\\
        &=\mathbb{1}_{|G_0|}\otimes D(g).
    \end{split}
\end{equation}
Otherwise, $D(g)$ can be written as $D(g_0)D(t)$ for $(g_0,t)\in G_0\rtimes \mathbb{Z}_2^\mathcal{T}$, so that
\begin{equation}
    \begin{split}
        &U_1 \left(D(g_0)D(t)\otimes D(g_0)D(t)\right) U_1^\dag\\
        &=\sum_{g',g''\in G_0}D(g'g_0\widetilde{g''^{-1}}t)\otimes \ket{g'^{-1}}\braket{g'^{-1}|g_0 \widetilde{g''^{-1}}}\bra{g''^{-1}}\\
        &=\sum_{g'\in G_0}D(g'g_0g_0^{-1}g'^{-1}t)\otimes \ket{g'^{-1}}\bra{\widetilde{g_0^{-1}g'^{-1}}}\\
        &=D(t)\otimes \sum_{g'\in G_0} \ket{g'}\bra{g'}D(g_0)D(t)\\
        &=D(t)\otimes D(g),
    \end{split}
\end{equation}
\begin{equation}
    \begin{split}
        &U_2\left(D(t)\otimes D(g)\right)U_2^\dag\\
        &=u_2D(t)u_2^\dag \otimes D(g)\\
        &=\frac{1}{2}\sum_{g\in G_0}\left(\omega_8\ket{g}+\omega_8^{-1}\ket{\tilde{g}}\right)\left(\omega_8\bra{\tilde{g}}+\omega_8^{-1}\bra{g}\right) \otimes D(g)\\
        &=\mathbb{1}_{|G_0|} \otimes D(g).
    \end{split}
\end{equation}
Together, the above equations show that $u$ condenses 2-site symmetry to 1-site symmetry.
Continuing application of $u$ until the number of $D(g)$ reaches one, the equivalence of $D(g)^{\otimes N}$ and $\mathbb{1}_{|G_0|^{N-1}}\otimes D(g)$ is proven.

\section{Setup and corollaries}

Recall that our rearranged system has four subsystems $L,l,r$, and $R$ from left to right, which undergo a symmetry transformation of $\mathbb{1}_L,D(g),D(g)$, and $\mathbb{1}_R$, respectively. 
Within the symmetric subspace, subsystem $l\cup r$ has $|G_0|$ independent basis states: 
\begin{equation}
    \ket{\psi_g} = \frac{1}{\sqrt{|G_0|}} \sum_{h \in G_0} \ket{h g}\ket{h}, \quad \forall g \in G_0.
\end{equation}
To see this, we recall the action of $g'\in G$ is $D(g')\otimes D(g')$ on $l\cup r$, thus $\forall g,g'\in G_0$ we have
    \begin{equation}
    \begin{split}
        D(g')\otimes D(g')\ket{\psi_g}&=\frac{1}{\sqrt{|G_0|}} \sum_{h \in G_0} \ket{g'h g}\ket{g'h}\\
        &=\frac{1}{\sqrt{|G_0|}} \sum_{h \in G_0} \ket{h g}\ket{h}=\ket{\psi_g}.
    \end{split}
    \end{equation}
    In contrast, the action of time reversal switches $\ket{\psi_g}\leftrightarrow \ket{\psi_{\tilde g}}$, 
since
    \begin{equation}
    \begin{split}
        D(t)\otimes D(t)
        \ket{\psi_g}&=\frac{1}{\sqrt{|G_0|}} \sum_{h \in G_0} \ket{\tilde{h g}}\ket{\tilde{h}}\\
        &=\frac{1}{\sqrt{|G_0|}} \sum_{h \in G_0} \ket{h \tilde{g}}\ket{h}=\ket{\psi_{\tilde{g}}}.
    \end{split}
    \end{equation}
A general $G$-invariant global state is then given by
\begin{equation}
\begin{split}
    \ket{\Psi} &= \sum_{L,g,R} c_{L,g,R} \ket{L}\ket{\psi_g}\ket{R}\\
    &= \frac{1}{\sqrt{|G_0|}} \sum_{L,g_l,g_r,R} c_{L,g_r^{-1}g_l, R}\ket{L}\ket{g_l}\ket{g_r}\ket{R},
\end{split}
\label{SMPsi}
\end{equation}
where $c_{L,g,R}=\overline{c_{L,\tilde g,R}}$ in the presence of TRS.

In this setup, the density matrix $\rho=\Tr_{rR}|\Psi\rangle\langle\Psi|$ is invariant under the action of $\mathbb{1}_L \otimes D(g)$ for $\forall g \in G$, i.e., $[\rho,\mathbb{1}_L\otimes D(g)]=0$. To incorporate symmetry fractionalization specified by a nontrivial 2-cocycle $\omega$, one can apply the following local unitary gate (supported on $l\cup r$) $\Omega$ to $\ket{\Psi}$:
\begin{equation}
    \Omega=\sum_{g_l,g_r\in G_0} \omega(g_r,g_r^{-1}g_l)\ket{g_l,g_r}\bra{g_l,g_r}. 
\end{equation}
The representation $\mathcal{D}(g)$ identified from  
$
\mathcal{D}(g)\otimes\overline{\mathcal{D}(g)}
=\Omega \left[D(g) \otimes D(g) \right] \Omega^\dagger$ is projective and satisfies $\mathcal{D}(g)\mathcal{D}(h)=\omega(g,h)\mathcal{D}(gh)$.
To see this, we first calculate the $\Omega$-transformation of $D(g)\otimes D(g)$.
As for a unitary element $g\in G_0$, we have
\begin{equation}
\begin{split}
    &\Omega \left[D(g)\otimes D(g)\right] \Omega^\dagger\\
    &=\sum_{g_l,g_r,g_l',g_r'}\omega(g_r,g_r^{-1}g_l)\ket{g_l,g_r}\bra{g_l,g_r}\\
    &\times \left[D(g)\otimes D(g)\right]  \overline{\omega(g_r',g_r'^{-1}g_l')}\ket{g_l',g_r'}\bra{g_l',g_r'}\\
    &=\sum_{g_l',g_r'}\frac{\omega(gg_r',g_r'^{-1}g_l')}{\omega(g_r',g_r'^{-1}g_l')}\ket{g g_l',g g_r'}\bra{g_l',g_r'}\\
    &=\sum_{g_l',g_r'}\frac{\omega(g,g_l')}{\omega(g,g_r')}\ket{g g_l',g g_r'}\bra{g_l',g_r'}\\
    &=\left[\sum_{g_l}\omega(g,g_l)\ket{g g_l}\bra{g_l}\right]\otimes\left[\sum_{g_r} \overline{\omega(g,g_r)}\ket{g g_r}\bra{g_r}\right].
\end{split}
\end{equation}

We move on to consider the effect of TRS. It turns out that, just like the case without symmetry fractionalization, 
the constraint from TRS produces the relation $c_{L,\tilde{g},R}=\overline{c_{L,g,R}}$. This can be seen from the fact that $[D(t)\otimes D(t),\Omega]=0$:
\begin{equation}
\begin{split}
&[D(t)\otimes D(t)]\Omega[D(t)\otimes D(t)]\\
&=\sum_{g_l,g_r\in G_0}\overline{\omega(g_r,g_r^{-1}g_l)}\ket{\tilde g_l,\tilde g_r}\bra{\tilde g_l,\tilde g_r} \\
&=\sum_{\tilde g_l,\tilde g_r\in G_0}\omega(\tilde g_r,\tilde g_r^{-1}\tilde g_l)\ket{\tilde g_l,\tilde g_r}\bra{\tilde g_l,\tilde g_r}=\Omega,
\end{split}
\end{equation}
where we have used $\omega(g,h)\omega(\tilde g,\tilde h)=1$ $\forall g,h\in G_0$. This implies $\forall g=g_0t\in G\setminus G_0$, $\Omega \left[D(g)\otimes D(g)\right] \Omega^\dag=\mathcal{D}(g_0)D(t)\otimes \overline{\mathcal{D}(g_0)}D(t)$.

Without the fractionalization of TRS, suppose $g=g_0\tau\, (\tau=e\,\mathrm{or}\,t)$, $\mathcal{D}(g)$ is identified as
\begin{equation}
\label{eq:projregular}
    \mathcal{D}(g)=\sum_{g'\in G_0}\omega(g_0,g')\ket{g_0 g'}\bra{g'}D(\tau),
\end{equation}
which can be confirmed to be a projective representation with cocycle $\omega$: $\mathcal{D}(g)\mathcal{D}(h)=\omega(g,h)\mathcal{D}(gh)$. 
In fact, in case of the product of $g=g_0$ and $h=h_0\tau'$,
\begin{equation}
\label{eq:projective_regular_proof1}
\begin{split}
    &\mathcal{D}(g)\mathcal{D}(h)\\
    &=\sum_{g',h'}\omega(g_0,g')\omega(h_0,h')\ket{g_0 g'}\langle g'|h_0 h'\rangle
    \bra{h'}D(\tau')\\
    &=\sum_{h'}\omega(g_0,h_0 h' )\omega(h_0,h')\ket{g_0 h_0 h'}\bra{h'}D(\tau')\\
    &=\omega(g_0,h_0)\sum_{h'}\omega(g_0 h_0,h')\ket{g_0 h_0 h'}\bra{h'}D(\tau')\\
    &=\omega(g_0,h_0)\mathcal{D}(gh).
\end{split}
\end{equation}
Otherwise, namely, if $g=g_0t$ and $h=h_0\tau'$,
\begin{equation}
\label{eq:projective_regular_proof2}
\begin{split}
    &\mathcal{D}(g)\mathcal{D}(h)\\
    &=\sum_{g',h'}\frac{\omega(g_0,g')}{\omega(h_0,h')}\ket{g_0 g'}\langle g'|\widetilde{h_0 h'}\rangle
    \bra{h'}D(\tau')\\
    &=\sum_{g',h'}\frac{\omega(g_0,g')}{\omega(h_0,\tilde{h'})}\ket{g_0 g'}\langle g'|\tilde{h_0} h'\rangle
    \bra{\tilde{h'}}D(\tau')\\
    &=\sum_{h'}\omega(g_0,\tilde{h_0} h' )\omega(\tilde{h_0},h')\ket{g_0 \tilde{h_0} h'}\bra{h'}D(t\tau')\\
    &=\omega(g_0,\tilde{h_0})\sum_{h'}\omega(g_0\tilde{h_0},h')\ket{g_0\tilde{h_0} h'}\bra{h'}D(t\tau')\\
    &=\omega(g_0,\tilde{h_0})\mathcal{D}(gh).
\end{split}
\end{equation}

Since we did not fractionalize TRS, $\omega(t,t)$ is 1. By using $\omega(t,t)=1$ and decoupling formulae (\ref{eq:decouplings}), one can show $\omega(g_0,h_0)=\omega(g_0,h_0t)$ and $\omega(g_0t,h_0)=\omega(g_0,\tilde{h_0})=\omega(g_0t,h_0t)$. By summarizing all the above situations, $\mathcal{D}(g)\mathcal{D}(h)=\omega(g,h)\mathcal{D}(gh)$ is proven.

One can further fractionalize the TRS by applying the following $\Upsilon$ gate to 2-dimensional subsystems in $L$ and $R$ of $\Omega\ket{\Psi}$:
\begin{equation}
    \Upsilon=\frac{1-i}{2}\left[\mathbb{1}_4-i(\sigma_y\otimes\sigma_y)\right].
\end{equation}
The fractionalization by $\Upsilon$ enfolds two 2-dimensional subsystems in $L$ and $R$ into the projective regular representation of $G_0\rtimes \mathbb{Z}_2^\mathcal{T}$.
To see this, we consider the action of $\Upsilon$ on $\mathbb{1}_L\otimes \mathcal{D}(g)\otimes \overline{\mathcal{D}(g)}\otimes \mathbb{1}_R$. Since $\Upsilon$ acts only on $L$ and $R$, unitary part $\mathcal{D}(g_0)\otimes\overline{\mathcal{D}(g_0)}$ commutes with $\Upsilon$. On the contrary, $D(t)\otimes D(t)$ includes complex conjugate operation, which is not commutable with $\Upsilon$. Therefore, only the fractionalization of $\mathbb{1}_L\otimes D(t)\otimes D(t)\otimes \mathbb{1}_R$ is essential. The result is 
\begin{equation}
    \begin{split}
        &\Upsilon\left[\mathbb{1}_L\otimes D(t)\otimes D(t)\otimes \mathbb{1}_R\right]\Upsilon^\dag\\
        &=\mathbb{1}_{L'}\otimes i\sigma_y \otimes D(t)\otimes D(t)\otimes i\sigma_y\otimes \mathbb{1}_{R'},
    \end{split}
\end{equation}
which implies the action of $\mathbb{Z}_2^\mathcal{T}$ is $D(t)\otimes i\sigma_y$.

We discuss the effect of $\Upsilon$ on the calculation rule of projective representations in Eqs.~(\ref{eq:projective_regular_proof1}) and (\ref{eq:projective_regular_proof2}). The only change is the additional sign in Eq.~(\ref{eq:projective_regular_proof2}). By considering it in conjunction with Eq.~(\ref{eq:projective_regular_proof1}), we obtain the conclusion that $\mathcal{D}(g)\mathcal{D}(h)=\omega(g,h)\mathcal{D}(gh)=\omega(g_0,\tilde{h_0})\omega(\tau,\tau')\mathcal{D}(gh)$ for $g=g_0\tau$ and $h=h_0\tau'$ where $g_0,h_0\in G_0$ and $\tau,\tau'\in \mathbb{Z}_2^\mathcal{T}$. 
Here $\omega(t,t)=1$ in the absence of $\Upsilon$, otherwise $\omega(t,t)=-1$.
That is why the action of $\Upsilon$ can be interpreted as the symmetry fractionalization of $\mathbb{Z}_2^\mathcal{T}$ part for $G_0\rtimes \mathbb{Z}_2^\mathcal{T}$.

\section{Classification of irreps}
It is known that the Frobenius-Schur indicator determines whether a unitary irrep $\alpha$ is real, complex, or quaternionic \cite{bradley2009mathematical}.
Such an indicator can be generalized to deal with projective irreps, which may further be twisted by an involutory group automorphism induced by TRS  
\cite{kawanaka1990twisted,bradley2009mathematical}:
\begin{equation}
\begin{split}
    &\frac{1}{|G_0|}\sum_{g\in G_0}\omega(tg,tg)\chi_\alpha((tg)^2)\\
    &=\frac{\omega(t,t)}{|G_0|}\sum_{g\in G_0}\omega(\tilde{g},g)\chi_\alpha(\tilde{g}g).
\end{split}
\label{ind}
\end{equation}
Here the sixth decoupling formula in Eq.~(\ref{eq:decouplings}) was used. Focusing on the unitary part, we define the \emph{indicator} of irrep $\alpha$ as follows:
\begin{equation}
    \iota_\alpha=\frac{1}{|G_0|}\sum_{g\in G_0}\omega(\tilde{g},g)\chi_\alpha(\tilde{g}g).
\end{equation}
There are three possibilities: 
\begin{enumerate}
    \item $\iota_\alpha=1\Leftrightarrow \overline{\mathcal{D}^\alpha(\tilde{g})}=\mathcal{D}^\alpha(g)$,
    \item $\iota_\alpha=-1\Leftrightarrow \overline{\mathcal{D}^\alpha(\tilde{g})}=Y\mathcal{D}^\alpha(g)Y,\;\;Y=\sigma_y\otimes\mathbb{1}_{d_\alpha/2}$,
    \item $\iota_\alpha=0\Leftrightarrow$ 
    $\overline{\mathcal{D}^\alpha(\tilde{g})}$ and $\mathcal{D}^\alpha(g)$ are different irreps.
\end{enumerate}
To show these results, we first note that
\begin{equation}
\begin{split}
    \overline{\mathcal{D}^\alpha(\tilde{g})}\overline{\mathcal{D}^\alpha(\tilde{g'})}
    &=\overline{\omega(\tilde{g},\tilde{g'})\mathcal{D}^\alpha(\widetilde{gg'})}\\
    &=\omega(g,g')\overline{\mathcal{D}^\alpha(\widetilde{gg'})},
\end{split}
\end{equation}
where $\omega(g,h)\omega(\tilde g,\tilde h)=1$ has been used. Hence, $\mathcal{D}^{\alpha^\star}
(g):=\overline{\mathcal{D}^\alpha(\tilde{g})}$ should be an irrep with the same cocycle as well. 
Given irrep $\alpha$, consider the following operator:
\begin{equation}
    X^\alpha=\frac{1}{|G_0|}\sum_{g\in G_0}\omega(\tilde{g},g)\mathcal{D}^\alpha (\tilde{g}g),   
\end{equation}
whose trace gives the indicator (\ref{ind}). 
This operator can be rewritten as 
\begin{equation}
    \begin{split}
    X^\alpha
    &=\frac{1}{|G_0|}\sum_{g\in G_0}\omega(\tilde{g},g)\mathcal{D}^\alpha (\tilde{g}g)\\
    &=\frac{1}{|G_0|}\sum_{g\in G_0}\mathcal{D}^\alpha (\tilde{g})\mathcal{D}^\alpha(g)\\
    &=\frac{1}{|G_0|}\sum_{g\in G_0}\overline{\mathcal{D}^{\alpha^\star} (g)}\mathcal{D}^\alpha(g).
    \end{split}
\end{equation}

If $\alpha^\star\nsim \alpha$, $X^\alpha=0$ follows from Theorem~\ref{thm:orthogonality}, implying $\iota=0$. 
Conversely, if $\iota_\alpha=0$ we necessarily have $\alpha^\star\nsim \alpha$, since otherwise $\iota_\alpha=\pm1$, as will be clear in the following. 
If $\alpha^\star\sim \alpha$, the irreps are related by a unitary transformation:
$\mathcal{D}^{\alpha^\star}(g)=\overline{\mathcal{D}^{\alpha}(\tilde g)}=u^\dag\mathcal{D}^\alpha(g) u.$
By using this reation again, we get
\begin{equation}
    \mathcal{D}^{\alpha}(g)=u^{\rm T}\overline{\mathcal{D}^\alpha(\tilde g)}\overline{u}=u^{\rm T} u^\dag\mathcal{D}^\alpha(g)u \overline{u},
\end{equation}
which implies $u\overline{u}=(uK)^2=\lambda \mathbb{1}_{d_\alpha}$ with $|\lambda|=1$ and $K$ being the complex conjugation.
Considering $(uK)^3=(uK)^2uK=uK(uK)^2$, we get $\lambda=\pm 1$. In this case, using the fact that $\{\mathcal{D}^\alpha(g)\}_{g\in G_0}$ is a unitary $1$-design, so that 
\begin{equation}
\frac{1}{|G_0|}\sum_{g\in G_0}\overline{\mathcal{D}^{\alpha} (g)}A \mathcal{D}^\alpha(g)=\frac{1}{d_\alpha} A^{\rm T}
\end{equation}
for any operator $A$ on the vector space of irrep $\alpha$, as is clear from the graphical representation  
\begin{equation}
\begin{tikzpicture}
\draw[ultra thick] (0.3,-1) -- (0.3,1) (1,-1) -- (1,1) (1.7,-1) -- (1.7,1);
\draw[ultra thick] (0.3,1) .. controls (0.3,1.3) and (1,1.3) .. (1,1);
\draw[ultra thick] (1,-1) .. controls (1,-1.3) and (1.7,-1.3) .. (1.7,-1);
\draw[thick,fill=blue!10!white] (0,0) rectangle (0.6,0.6);
\draw[thick,fill=blue!10!white] (1.4,0) rectangle (2,0.6);
\draw[thick,fill=red!10!white] (1.7,-0.5) circle (0.3);
\Text[x=-0.5,y=0,fontsize=\large]{$\mathbb{E}_{U}$}
\draw[thick] (-0.1,-1.2) -- (-0.2,-1.2) -- (-0.2,1.2) -- (-0.1,1.2);
\draw[thick] (2.1,-1.2) -- (2.2,-1.2) -- (2.2,1.2) -- (2.1,1.2);
\Text[x=0.3,y=0.3,fontsize=\large]{$U$}
\Text[x=1.7,y=0.335,fontsize=\large]{$\overline{U}$}
\Text[x=1.7,y=-0.5,fontsize=\large]{$A$}
\end{tikzpicture}
\;\;\;\;
\begin{tikzpicture}
\draw[ultra thick] (0.3,0) .. controls (0.3,0.3) and (1.7,0.3) .. (1.7,0);
\draw[ultra thick] (0.3,0.6) .. controls (0.3,0.3) and (1.7,0.3) .. (1.7,0.6);
\fill[white] (0.9,0) rectangle (1.1,0.5);
\draw[ultra thick] (0.3,-1) -- (0.3,0) (0.3,0.6) -- (0.3,1) (1,-1) -- (1,1) (1.7,-1) -- (1.7,0) (1.7,0.6) -- (1.7,1);
\draw[ultra thick] (0.3,1) .. controls (0.3,1.3) and (1,1.3) .. (1,1);
\draw[ultra thick] (1,-1) .. controls (1,-1.3) and (1.7,-1.3) .. (1.7,-1);
\draw[thick,fill=red!10!white] (1.7,-0.5) circle (0.3);
\Text[x=-0.4,y=0,fontsize=\large]{$=\frac{1}{d_\alpha}$}
\Text[x=1.7,y=-0.5,fontsize=\large]{$A$}
\end{tikzpicture}.
\end{equation}
Accordingly, we have
\begin{equation}
\begin{split}
X^\alpha
    &=\frac{1}{|G_0|}\sum_{g\in G_0}u^\dag\overline{\mathcal{D}^{\alpha} (g)}u \mathcal{D}^\alpha(g)\\
    &=\frac{1}{d_\alpha}(u\overline{u})^{\rm T}
     =\lambda\frac{\mathbb{1}_{d_\alpha}}{d_\alpha},
\end{split}    
\end{equation}
implying $\iota_\alpha=\lambda$. Since we can always choose an appropriate basis such that $u=\mathbb{1}_{d_\alpha}$ ($u=\sigma_y\otimes\mathbb{1}_{d_\alpha/2}$) for $\lambda=1$ ($\lambda=-1$), we obtain the desired results.

\section{Detailed proof of theorem \ref{thm:espectrum}} \label{sec:mainproof}

The reduced density matrix on $L\cup l$ of $\Omega\ket{\Psi}$ is $\rho = W W^\dagger$, where the matrix elements $[W]_{L,g_l|g_r,R}=\langle L|\langle g_l|W|g_r\rangle|R\rangle$ are given by
\begin{equation}
    [W]_{L,g_l|g_r,R} = \frac{c_{L,g_r^{-1}g_l,R} \omega(g_r, g_r^{-1}g_l)}{\sqrt{|G_0|}}.
\end{equation}

\subsection{Partial diagonalization}

We show that $W_{L,R}$, which is a $|G_0|\times|G_0|$ block with entries $[W_{L,R}]_{g_l,g_r}=[W]_{L,g_l|g_r,R}$, is partially diagonalizable with the following matrix:
\begin{equation}
    [U]_{\theta,g}=\sqrt{\frac{d_\alpha}{|G_0|}}\mathcal{D}^\alpha_{ij}(g),
\end{equation}
where $\alpha$ labels an irrep, $d_\alpha$ is the dimension of irrep $\alpha$, $i,j$ are matrix indices of irrep $\alpha$, and $\theta=(\alpha,i,j)$. The square sum theorem $\sum_\alpha {d_\alpha}^2=|G_0|$ guarantees $U$ is a square matrix.  Moreover, one can check using Theorem~\ref{thm:orthogonality} that $U$ is unitary: 
\begin{equation}
\begin{split}
    [UU^\dagger]_{\theta,\theta'}&=\frac{\sqrt{d_\alpha d_{\alpha'}}}{|G_0|}\sum_g \mathcal{D}^\alpha_{i j}(g)\overline{\mathcal{D}^{\alpha'}_{i' j'}(g)}\\
    &=\frac{\sqrt{d_\alpha d_{\alpha'}}}{|G_0|}\frac{|G_0|}{d_\alpha}\delta_{\alpha,\alpha'}\delta_{i,i'}\delta_{j,j'}\\
    &=\delta_{\theta,\theta'}.
\end{split}
\end{equation}

\begin{figure}
\begin{center}
       \includegraphics[width=7cm, clip]{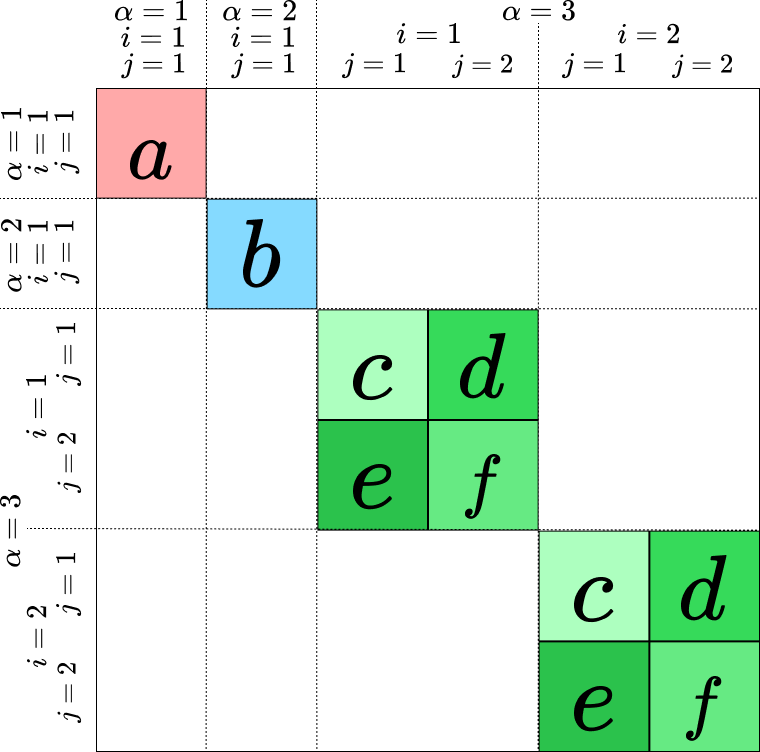}
       \end{center}
   \caption{Configuration of matrix block $W_{L,R}$. Here $G=G_0=C_{3v}$ was chosen for an example. $C_{3v}$ has three irreducible linear representations, two ($\alpha=1,2$) of which are 1D and the other $\alpha=3$ is 2D. $\alpha=1,2$ are 1D, thus they correspond to two independent $1\times1$ blocks $a$ and $b$. $\alpha=3$ is 2D, thus it corresponds to two identical $2\times2$ blocks $\begin{bmatrix} c & d \\ e & f \end{bmatrix}$.} 
   \label{fig:WLR}
\end{figure}

Applying $U$ conjugation to $W_{L,R}$, we get
\begin{equation}
\begin{split}
    &[U W_{L,R} U^\dagger]_{\theta,\theta'}\\
    &=\frac{\sqrt{d_\alpha d_{\alpha'}}}{|G_0|^{3/2}}\sum_{g_l,g_r}\mathcal{D}^\alpha_{i j}(g_l)c_{L,g_r^{-1}g_r,R}\omega(g_r,g_r^{-1}g_l)\overline{\mathcal{D}^{\alpha'}_{i'j'}(g_r)}.
\end{split}
\end{equation}
Retaking $g=g_r^{-1}g_l$ and using $\mathcal{D}(g)\mathcal{D}(h)=\omega(g,h)\mathcal{D}(gh)$, one can get
\begin{equation}
\label{eq:partial_diag}
\begin{split}
    &[U W_{L,R} U^\dagger]_{\theta,\theta'}\\
    &=\frac{\sqrt{d_\alpha d_{\alpha'}}}{|G_0|^{3/2}}\sum_{g,g_l}\mathcal{D}^\alpha_{i j}(g_l)c_{L,g,R}\omega(g_lg^{-1},g)\overline{\mathcal{D}^{\alpha'}_{i'j'}(g_lg^{-1})}\\
    &=\frac{\sqrt{d_\alpha d_{\alpha'}}}{|G_0|^{3/2}}\sum_{g,g_l,j''}\mathcal{D}^\alpha_{i j}(g_l)c_{L,g,R}\frac{\overline{\mathcal{D}^{\alpha'}_{i'j''}(g_l)\mathcal{D}^{\alpha'}_{j''j'}(g^{-1})}}{\overline{\omega(g^{-1},g)}}\\
    &=\frac{1}{\sqrt{|G_0|}}\sum_{g,j''}c_{L,g,R}\delta_{\alpha,\alpha'}\delta_{i, i'}\delta_{j, j''}\omega(g^{-1},g)\overline{\mathcal{D}^{\alpha'}_{j''j'}(g^{-1})}\\
    &=\frac{\delta_{\alpha,\alpha'}\delta_{i, i'}}{\sqrt{|G_0|}}\sum_{g}c_{L,g,R}\omega(g^{-1},g)\overline{\mathcal{D}^{\alpha'}_{jj'}(g^{-1})}\\
    &=\frac{\delta_{\alpha,\alpha'}\delta_{i,i'}}{\sqrt{|G_0|}}\sum_g c_{L,g,R}\mathcal{D}^\alpha_{j'j}(g).
\end{split}
\end{equation}
The final form involves $\delta_{\alpha,\alpha'}\delta_{i,i'}$,  
so $W_{L,R}$ is partially diagonalized into a direct sum of blocks labeled by $\alpha$ and $i$. In each block, a matrix element is specified by the remaining indices $j$ and $j'$ (on top of $L,R$, if we return to the entire $W$). See Fig.~\ref{fig:WLR} for an example. As the expression turns out to be $i$-independent, all the blocks with the same $\alpha$ but different $i$ should be exactly the same. Since the degeneracy is exactly $d_\alpha$, we have the following direct-sum decomposition: 
\begin{equation}
W=\bigoplus_\alpha \mathbb{1}_{d_\alpha}\otimes W_\alpha,
\label{Wdec}
\end{equation}
where $W_\alpha$ is a $d_Ld_\alpha\times d_Rd_\alpha$ matrix (see the lower panel in Fig.~\ref{fig:rearrangeW}) with entries
\begin{equation}
W_{L,R,\alpha,j,j'}=\frac{1}{\sqrt{|G_0|}}\sum_g c_{L,g,R}\mathcal{D}^\alpha_{j'j}(g).
\label{WLRajj}
\end{equation}

Accordingly, $\rho=W W^\dagger$ is decomposed into
\begin{equation}
    \rho=\bigoplus_\alpha \mathbb{1}_{d_\alpha}\otimes \rho_\alpha, 
\end{equation}
where $\rho_\alpha$ is a $d_L d_\alpha\times d_L d_\alpha$ matrix.

\subsection{Identification of the statistics}

This section identifies what ensemble each component of the direct sum decomposition (\ref{Wdec}) and the corresponding entanglement spectrum obey.
Since $c_{L,g,R}$ are sampled from identical independently distributed (i.i.d.) complex Gaussian variables $\mathcal{CN}(0,1)$, and all the matrix elements are their linear combinations, we can 
consider the covariance matrix to completely determine the statistics. In particular, a pair of complex Gaussian random variables
$a,b$ are independent iff
$\mathbb{E}(a\overline{b})=\mathbb{E}(ab)=0$.
In addition, to confirm the independence of the real and imaginary parts of each variable (and that their variances are equal), we further require $\mathbb{E}(a^2)=0$. 

\subsubsection{$G=G_0$}

First, we consider a general unitary symmetry $G=G_0$. Since $c_{L,g,R}$ are i.i.d., matrix elements with different $L$ or $R$ are obviously independent. Given $L$ and $R$, the covariance matrix 
reads
\begin{equation}
\begin{split}
    &\mathbb{E}\left[W_{L,R,\alpha,j,j'}\overline{W_{L,R,\alpha',l,l'}}\right]\\
    &=\frac{1}{|G_0|}\mathbb{E}\left[\left(\sum_g c_{L,g,R}\mathcal{D}^\alpha_{j' j}(g)\right)\overline{\left(\sum_{g'} c_{L,g',R}\mathcal{D}^{\alpha'}_{l' l}(g')\right)}\right]\\
    &=\frac{1}{|G_0|}\sum_{g,g'} \mathbb{E}\left[c_{L,g,R}\overline{c_{L,g',R}}\right]
    \mathcal{D}^\alpha_{j' j}(g)\overline{\mathcal{D}^{\alpha'}_{l' l}(g')}\\
    &=\frac{1}{d_\alpha}\delta_{\alpha,\alpha'}\delta_{j l}\delta_{j' l'},
\end{split}
\label{EWbW}
\end{equation}
where we have used $\mathbb{E}[c_{L,g,R}\overline{c_{L,g',R}}]=\delta_{g,g'}$ and Theorem~\ref{thm:orthogonality}. On the other hand, we have 
\begin{equation}
\begin{split}
    &\mathbb{E}\left[W_{L,R,\alpha,j,j'}W_{L,R,\alpha',l,l'}\right]\\
    &=\frac{1}{|G_0|}\mathbb{E}\left[\left(\sum_g c_{L,g,R}\mathcal{D}^\alpha_{j' j}(g)\right)\left(\sum_{g'} c_{L,g',R}\mathcal{D}^{\alpha'}_{l' l}(g')\right)\right]\\
    &=\frac{1}{|G_0|}\sum_{g,g'}\mathbb{E}\left[c_{L,g,R}c_{L,g',R}\right]\mathcal{D}^\alpha_{j' j}(g)\mathcal{D}^{\alpha'}_{l' l}(g')\\
    &=0,
\end{split}
\label{EWW}
\end{equation}
which follows from $\mathbb{E}\left[c_{L,g,R}c_{L,g',R}\right]=0$. Therefore, there is no correlation between different matrix elements. 
The covariance matrix indicates the matrix element $W_{L,R,\alpha,j,j'}$ obey i.i.d. $\mathcal{N}(0,1/d_\alpha)$.

We recall that exactly the same square complex Gaussian random matrix blocks are repeated $d_\alpha$ times for an index $\alpha$ (cf. Fig.~\ref{fig:WLR}). This implies that, if $G=G_0$, the matrix ensemble of $WW^\dag$ (\ref{Wdec}) 
is identified as
\begin{equation}
\label{eq:unitary_decomposition}
    \bigoplus_{\alpha}
    \left[
        \frac{\mathbb{1}_{d_\alpha}}{d_\alpha} \otimes \mathrm{LUE}^{d_Ld_\alpha\times d_Rd_\alpha}_\alpha
    \right],
\end{equation}
whose singular value distribution conditioned on the normalization constraint gives the statistics of the entanglement spectrum.

Finally, we take a glance at the joint distribution $\rho(\lambda_1,\dots)$ of the entanglement spectrum.
By assuming ($d_L\leq d_R$), the nonzero eigenvalues $\{\lambda_i\}_{i=1}^{d_L|G_0|}$ of the reduced density matrix satisfy $\sum_{i=1}^{d_L|G_0|} \lambda_i=1$. Therefore the joint distribution contains $\delta\left(\sum_{i=1}^{d_L|G_0|} \lambda_i-1\right)$. Moreover, according to Eq.~(\ref{eq:unitary_decomposition}), the eigenvalues can be relabeled as follows:
\begin{gather}
    \lambda_1,\dots,\lambda_{d_L|G_0|}\nonumber\\
    \downarrow\\
    \lambda_{1,\alpha_1},\dots, \lambda_{d_Ld_{\alpha_1},\alpha_1},\lambda_{1,\alpha_2},\dots, \lambda_{d_Ld_{\alpha_2},\alpha_2},\dots\nonumber
\end{gather}
Note that the eigenvalue $\lambda_{i,\alpha}$ 
is $d_\alpha$-fold degenerate, thus the normalization condition is rewritten as
\begin{equation}
    \sum_\alpha d_\alpha \sum_{i=1}^{d_Ld_\alpha} \lambda_{i,\alpha}=1.
    \label{da1}
\end{equation}

The standard deviation of $W_{L,R,\alpha,j,j'}$ is $d_\alpha^{-1}$, 
thus the exponential weight is $e^{-\sum_\alpha d_\alpha \sum_{i=1}^{d_\alpha d_L}\lambda_{i,\alpha}}$, which can be absorbed into the normalization factor due to Eq.~(\ref{da1}).
Taking everything in consideration, we get the joint distribution of entanglement spectrum of $W W^\dagger$:
\begin{equation}
\begin{split}
    &p(\{\lambda_{i,\alpha}\})\propto
    \delta\left(
        \sum_\alpha d_\alpha \sum_{i=1}^{d_L d_\alpha}\lambda_{i,\alpha}-1
    \right)\\
    &\times 
    \prod_\alpha\left[
        \prod_{i=1}^{d_L d_\alpha} \lambda_{i,\alpha}^{d_\alpha (d_R-d_L)}
        \prod_{i<j}|\lambda_{i,\alpha}-\lambda_{j,\alpha}|^2
    \right].
\end{split}
\end{equation}

\begin{figure}
\begin{center}
       \includegraphics[width=8cm, clip]{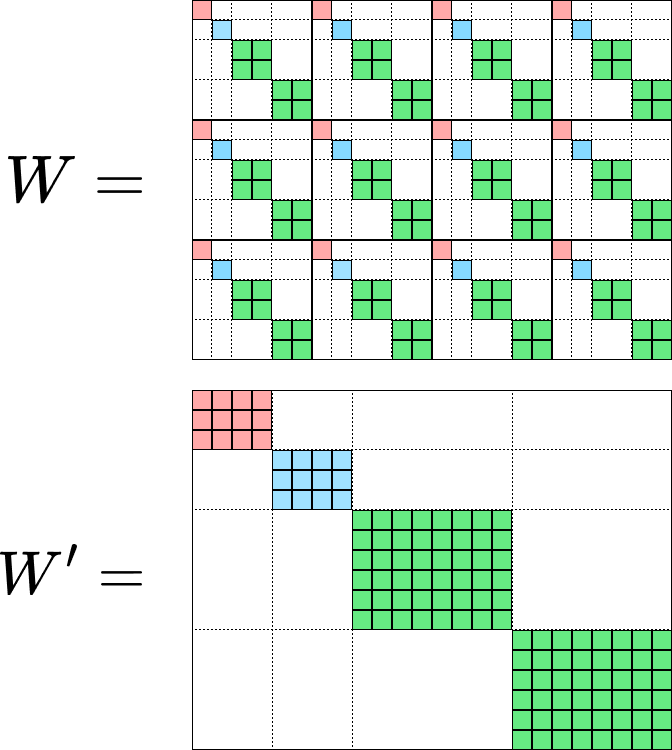}
       \end{center}
   \caption{Let $G=G_0=C_{3v}$ again, and let $d_L=3,d_R=4$. Rearranging $W\mapsto W'$, we get block-diagonalized $d_L|G_0|\times d_R|G_0|$ rectangular matrix. The first (red) and the second (blue) blocks are mutually independent, while the third and the fourth (green) blocks are identical.}
   \label{fig:rearrangeW}
\end{figure}

\subsubsection{$G=G_0\rtimes\mathbb{Z}_2^\mathcal{T}$ in the absence of $\Upsilon$}

Next we consider $G=G_0\rtimes \mathbb{Z}_2^\mathcal{T}$ (or $G=G_0\times \mathbb{Z}_2^\mathcal{T}$). In the absence of $\Upsilon$, partial diagonalization is already achieved by Eq.~(\ref{eq:partial_diag}). 
To determine the ensemble, it is necessary to explore the effect of time reversal on different irrep blocks. 
Taking the complex conjugation of Eq.~(\ref{WLRajj}), we obtain
\begin{equation}
\label{eq:partial_diag_cc}
\begin{split}
    \overline{\sum_g c_{L,g,R}\mathcal{D}^\alpha(g)}&=\sum_g c_{L,\tilde{g},R}\overline{\mathcal{D}^\alpha(g)}\\
    &=\sum_g c_{L,g,R}\overline{\mathcal{D}^\alpha(\tilde{g})}\\
    &=\sum_g c_{L,g,R}\mathcal{D}^{\alpha^\star}
    (g),
\end{split}
\end{equation}
where we have used $c_{L,\tilde g,R}=\overline{c_{L,g,R}}$. The three possibilities between $\mathcal{D}^\alpha$ and $\mathcal{D}^{\alpha^\star}$ can then be translated to $W_\alpha$. It is natural to expect that $W_\alpha$ is a LOE (LSE) block if $\iota_\alpha=1$ ($\iota_\alpha=-1$), and otherwise a LUE block conjugate to another one $W_{\alpha^\star}$ if $\iota_\alpha=0$. 

To verify the above conjecture, we have to calculate the covariance matrix. The only differences from the case of unitary symmetry are the constraints $c_{L,\tilde{g},R}=\overline{c_{L,g,R}}$ and $\omega(g,h)\omega(\tilde{g},\tilde{h})=1$. 
The first-type covariance matrix given in Eq.~(\ref{EWbW}) stays exactly the same
under these constraints. However, the second-type covariant matrix (\ref{EWW}) now becomes nontrivial in general: 
\begin{equation}
\label{eq:timerev_corr}
\begin{split}
    &\mathbb{E}\left[W_{L,R,\alpha,j,j'}W_{L,R,\alpha',l,l'}\right]\\
    &=\frac{1}{|G_0|}\mathbb{E}\left[\left(\sum_g c_{L,g,R}\mathcal{D}^\alpha_{j' j}(g)\right)\left(\sum_{g'} c_{L,g',R}\mathcal{D}^{\alpha'}_{l' l}(g')\right)\right]\\
    &=\frac{1}{|G_0|}\sum_{g,g'} \mathbb{E}\left[c_{L,g,R}\overline{c_{L,\tilde{g'},R}}\right] \mathcal{D}^\alpha_{j' j}(g)\mathcal{D}^{\alpha'}_{l' l}(g')\\
    &=\frac{1}{|G_0|}\sum_{g,g'} \delta_{g,\tilde{g'}} \mathcal{D}^\alpha_{j' j}(g)\mathcal{D}^{\alpha'}_{l' l}(g')\\
    &=\frac{1}{|G_0|}\sum_{g'} \mathcal{D}^\alpha_{j' j}(\tilde{g'})\mathcal{D}^{\alpha'}_{l' l}(g').
\end{split}
\end{equation}
For $\iota_\alpha=1$, $\mathcal{D}^\alpha(\tilde{g})=\overline{\mathcal{D}^\alpha(g)}$ for some proper basis, thus Eq.~(\ref{eq:partial_diag_cc}) implies the matrix elements are real. The covariance matrix is
\begin{equation}
\begin{split}
    &\mathbb{E}\left[W_{L,R,\alpha,j,j'}W_{L,R,\alpha',l,l'}\right]\\
    &=\frac{1}{|G_0|}\sum_{g'} \overline{\mathcal{D}^{\alpha}_{j' j}(g')}\mathcal{D}^{\alpha'}_{l' l}(g')\\
    &=\frac{1}{d_{\alpha'}}\delta_{\alpha,\alpha'}\delta_{j,l}\delta_{j',l'},
\end{split}
\end{equation}
which is the same as $\mathbb{E}\left[W_{L,R,\alpha,j,j'}\overline{W_{L,R,\alpha',l,l'}}\right]$, consistent with the realness of matrix elements. 
This implies no correlation between different matrix elements, so the real block $\alpha$ obeys the LOE.

For $\iota_\alpha=-1$, $\mathcal{D}^\alpha(\tilde{g})=Y\overline{\mathcal{D}^\alpha(g)}Y$ for some proper basis, thus Eq.~(\ref{eq:partial_diag_cc}) implies $2\times 2$ blocks in $W_{L,R}$ are spin representations of quaternions. The covariance matrix is 
\begin{equation}
\begin{split}
    &\mathbb{E}\left[W_{L,R,\alpha,j,j'}W_{L,R,\alpha',l,l'}\right]\\
    &=\frac{1}{|G_0|}\sum_{a,b,g'} Y_{j',a}\overline{\mathcal{D}^{\alpha}_{a,b}(g')}Y_{b,j}\mathcal{D}^{\alpha'}_{l' l}(g')\\
    &=\frac{1}{d_{\alpha'}}\sum_{a,b}Y_{j',a}Y_{b,j}\delta_{\alpha,\alpha'}\delta_{b,l}\delta_{a,l'}\\
    &=\frac{1}{d_{\alpha'}}\delta_{\alpha,\alpha'}Y_{l,j}Y_{j',l'}.
\end{split}
\end{equation}
This result indicates no correlation exists between different $2\times 2$ blocks in Eq.~(\ref{WLRajj}), simply due to that $Y_{j,l}$ is zero if $j$ and $l$ are not in the same block. Moreover, the factor $Y_{l,j}Y_{j',l'}$ is consistent with the spin representations of quaternion, so such a block obeys the LSE.

For $\iota_\alpha=0$, $\overline{\mathcal{D}^\alpha(\tilde{g})}$ is a different irrep $\alpha^\star$ from $\alpha$, thus
\begin{equation}
\begin{split}
    &\mathbb{E}\left[W_{L,R,\alpha,j,j'}W_{L,R,\alpha',l,l'}\right]\\
    &=\frac{1}{|G_0|}\sum_{g'} \overline{\mathcal{D}^{\alpha^\star}_{j' j}(g')}\mathcal{D}^{\alpha'}_{l' l}(g')\\
    &=\frac{1}{d_{\alpha'}}\delta_{\alpha^\star,\alpha'}\delta_{j,l}\delta_{j',l'}.
\end{split}
\end{equation}
This result indicates that every matrix element is complex and independent, so such a block obeys the LUE. Moreover, after an appropriate unitary transformation, blocks $\alpha$ and $\alpha^\star$ can always be made a complex conjugation pair.

Taking into account all the previous discussions, we can identify the matrix ensemble of $WW^\dag$ (\ref{Wdec}) as follows: 
\begin{gather}
    \left[\bigoplus_{\alpha:R_1}\frac{\mathbb{1}_{d_\alpha}}{d_\alpha} \otimes \mathrm{LOE}_\alpha^{d_Ld_\alpha\times d_Rd_\alpha}\right]\nonumber\\
    \oplus \nonumber\\
    \left[\bigoplus_{\alpha:R_0}\frac{\mathbb{1}_{d_\alpha}}{d_\alpha} \otimes \left(\mathrm{LUE}_\alpha^{d_Ld_\alpha\times d_Rd_\alpha}\oplus\overline{\mathrm{LUE}_\alpha^{d_Ld_\alpha\times d_Rd_\alpha}}\right)\right]\nonumber\\
    \oplus\nonumber\\
    \left[\bigoplus_{\alpha:R_{-1}}\frac{\mathbb{1}_{d_\alpha}}{d_\alpha} \otimes \mathrm{LSE}_\alpha^{d_Ld_\alpha\times d_Rd_\alpha}\right].\label{eq:without_up}
\end{gather}

We discuss the joint distribution of entanglement spectrum. From Eq.~(\ref{EWbW}), the variance of matrix elements $W_{L,R,\alpha,j,j'}$ is $d_\alpha^{-1}$ for $\alpha\in R_+$ or $\alpha\in R_0$. However, for $\alpha\in R_-$, $2\times 2$ matrix elements represent one quaternion. Here, quaternion $q$ is represented as $q\to\begin{pmatrix}a&b\\-\overline{b}&\overline{a}\end{pmatrix}$, where $a,b$ are independent random Gaussian complex numbers with variance $1/d_\alpha$. This is equivalent to $q=\mathfrak{Re}a\cdot 1+\mathfrak{Im}a\cdot\mathbb{i}+\mathfrak{Re}b\cdot \mathbb{j}+\mathfrak{Im}b\cdot\mathbb{k}$, where $\mathbb{i,j,k}$ are the basis vectors of quaternion. Thus, one can realize $\mathbb{E}[|q|^2]=\mathbb{E}[|a|^2+|b|^2]=2/d_\alpha$, which is different from that of $R_+$ or $R_0$.
To obtain the joint distribution, we review the difference of Gaussian probability density function (p.d.f.) between real, complex and quaternionic numbers.
One can check that the following p.d.f. produces random Gaussian variables with mean $0$ and variance $\sigma^2, \sigma^2, 2\sigma^2$ for real, complex, quaternionic numbers, respectively:
\begin{equation}
\begin{split}
    &\frac{1}{\sqrt{2\pi}\sigma}e^{-\frac{x^2}{2\sigma^2}}\quad \mathrm{for \; real\; number\;} x,\\
    &\frac{1}{\pi\sigma^2}e^{-\frac{|z|^2}{\sigma^2}}\quad \mathrm{for \; complex\; number\;} z,\\
    &\frac{1}{\pi^2\sigma^4}e^{-\frac{|q|^2}{\sigma^2}}\quad \mathrm{for \; quaternionic\; number\;} q.
\end{split}
\end{equation}
Under the circumstances of this case, $\sigma^2$ is replaced by $1/d_\alpha$. Based on the above discussion, we obtain the joint distribution of entanglement spectrum of $W W^\dagger$ as follows:
\begin{equation}
\label{eq:distrubution_antiunitary}
\begin{split}
    &p(\{\lambda_{i,\alpha}\})\\
    &\propto\delta\left(\sum_{\alpha:R_+} d_\alpha \sum_{i=1}^{d_L d_\alpha}\lambda_{i,\alpha}
        +2\sum_{\alpha:R_0} d_\alpha \sum_{i=1}^{d_L d_\alpha}\lambda_{i,\alpha}\right.\\
        &\quad\quad\quad \left.+2\sum_{\alpha:R_-} d_\alpha \sum_{i=1}^{d_L d_\alpha/2}\lambda_{i,\alpha}-1\right)\\
    &\times 
    \prod_{\alpha:R_+}\left[
        \prod_{i=1}^{d_L d_\alpha} \lambda_{i,\alpha}^{(d_\alpha (d_R - d_L)-1)/2}
        \prod_{i<j}^{d_L d_\alpha}|\lambda_{i,\alpha}-\lambda_{j,\alpha}|
    \right]\\
    &\times 
    \prod_{\alpha:R_0}\left[
        \prod_{i=1}^{d_L d_\alpha} \lambda_{i,\alpha}^{d_\alpha (d_R - d_L)}
        \prod_{i<j}^{d_L d_\alpha}|\lambda_{i,\alpha}-\lambda_{j,\alpha}|^2
    \right]\\
    &\times 
    \prod_{\alpha:R_-}\left[
        \prod_{i=1}^{d_L d_\alpha/2} \lambda_{i,\alpha}^{d_\alpha (d_R - d_L)+1}
        \prod_{i<j}^{d_L d_\alpha/2}|\lambda_{i,\alpha}-\lambda_{j,\alpha}|^4
    \right].
\end{split}
\end{equation}

Note that the LUE blocks always have involution pairs $\{\alpha,\alpha^\star\}$, and the LSE blocks have two-fold degenerate eigevalues which 
originate from the spin representation of quaternions.

\subsubsection{$G=G_0\rtimes\mathbb{Z}_2^\mathcal{T}$ in the presence of $\Upsilon$}

The gate $\Upsilon$ fractionalizes the TRS, and appears in Ref.
\cite{cirac2017matrix} as a building block of nontrivial time-reversal symmetric matrix-product unitary. In the presence of $\Upsilon$, $c_{L,g,R}$ in Eq.~(\ref{WLRajj})
is amended to
\begin{equation}
\begin{split}
    \frac{1-i}{2}\left(c_{L',\sigma_L,g,\sigma_R,R'}+i\sigma_L\sigma_R c_{L',-\sigma_L,g,-\sigma_R,R'}\right),
\end{split}
\end{equation}
where $L=L'\sigma_L$, $R=\sigma_R R'$ with $\sigma_{L,R}=\pm$ spanning a $2$-dimensional subsystem. 
Note that the action of $\Upsilon$ is compatible with 
the partial diagonalization by $U$, as they act on different subsystems. This may be understood from the fact that $\omega$ can be decoupled into the unitary and anti-unitary parts. Now a $2\times 2$ block in the partially diagonalized $W$ consisting of different indices $\sigma_L,\sigma_R$ appears as
\begin{equation}
    \label{eq:lue_with_ups}
    Q=
    \frac{1-i}{2}\begin{pmatrix}
        a+ib&c-id\\
        d-ic&b+ia
    \end{pmatrix}.
\end{equation}
Since $a,b,c,$ and $d$ are independent real or complex random Gaussian variables and have the same variance $1/d_\alpha$ for irrep $\alpha$, one can check $\mathbb{E}[|Q_{ij}|^2]=1/d_\alpha$ as well.
There are three possibilities: this $2\times2$ block is weaved into a LOE block, a LUE block, or a LSE block.

If this $2\times2$ block appears in the LOE parts of Eq.~(\ref{eq:without_up}), the coefficients $a,b,c,d$ are all real. In this case, this $2\times2$ block is the spin representation of a quaternion, which satisfies 
\begin{equation}
    \begin{split}
        \sigma_y \overline{Q} \sigma_y\\
        &=
        \begin{pmatrix}
            0&-1\\
            1&0
        \end{pmatrix}
        \frac{1+i}{2}\begin{pmatrix}
            a-ib&c+id\\
            d+ic&b-ia
        \end{pmatrix}
        \begin{pmatrix}
            0&1\\
            -1&0
        \end{pmatrix}\\
        &=
        \frac{1+i}{2}\begin{pmatrix}
            b-ia&-d-ic\\
            -c-id&a-ib
        \end{pmatrix}\\
        &=
        \frac{1-i}{2}\begin{pmatrix}
            a+ib&c-id\\
            d-ic&b+ia
        \end{pmatrix}\\
        &=Q.
    \end{split}
\end{equation}
Therefore, LOE blocks in Eq.~(\ref{eq:without_up}) are amended to LSE blocks under the action of $\Upsilon$. 

If this $2\times2$ block (Eq.~(\ref{eq:lue_with_ups})) appears in the LUE parts of Eq.~(\ref{eq:without_up}), $a,b,c,d$ will be complex and there will always be another block
\begin{equation}
    Q^\star
    =\frac{1-i}{2}
    \begin{pmatrix}
        \overline{a}+i\overline{b}&\overline{c}-i\overline{d}\\
        \overline{d}-i\overline{c}&\overline{b}+i\overline{a}
    \end{pmatrix},
\end{equation}
as LUE blocks always appear in complex pairs. In this case, the following relation holds:
\begin{equation}
    \begin{split}
        &\sigma_y \overline{Q}\sigma_y\\
        &=
        \begin{pmatrix}
            0&-1\\
            1&0
        \end{pmatrix}
        \frac{1+i}{2}\begin{pmatrix}
            \overline{a}-i\overline{b}&\overline{c}+i\overline{d}\\
            \overline{d}+i\overline{c}&\overline{b}-i\overline{a}
        \end{pmatrix}
        \begin{pmatrix}
            0&1\\
            -1&0
        \end{pmatrix}\\
        &=
        \frac{1-i}{2}\begin{pmatrix}
            \overline{a}+i\overline{b}&\overline{c}-i\overline{d}\\
            \overline{d}-i\overline{c}&\overline{b}+i\overline{a}
        \end{pmatrix}\\
        &=Q^\star. 
    \end{split}
\end{equation}
One can then perform a unitary transformation to retrieve the complex conjugate relation. Also, we can check that different elements in $Q$ are i.i.d. complex Gaussian variables. 
Based on the element representation of Eq.~(\ref{eq:lue_with_ups}), it is obvious that adjacent matrix elements are independent. For the remaining pairs,
\begin{equation}
    \begin{split}
        \mathbb{E}[Q_{11}Q_{22}]&=\mathbb{E}[(a+ib)(b+ia)]\\
        &=\mathbb{E}[ab-ba+ib^2+ia^2]=0,\\
        \mathbb{E}[Q_{11}\overline{Q_{22}}]&=\mathbb{E}[(a+ib)\overline{(b+ia)}]\\
        &=\mathbb{E}[(a+ib)(\overline{b}-i\overline{a})]\\
        &=\mathbb{E}[a\overline{b}+b\overline{a}+i|b|^2-i|a|^2]=0
    \end{split}
\end{equation}
holds due to $\mathbb{E}[ab]=\mathbb{E}[a\bar b]=\mathbb{E}[a^2]=\mathbb{E}[b^2]=0$ and $\mathbb{E}[|a|^2]=\mathbb{E}[|b|^2]$. 
The other pair $Q_{12},\,Q_{21}$ can be ascertained to be independent in the same way. Since diferent $2\times2$ blocks are obviously independent, the matrix ensemble is identified as the LUE.
Therefore, the action of $\Upsilon$ has essentially no effects on the LUE blocks.

Finally, the LSE parts in Eq.~(\ref{eq:without_up}) are transformed by $\Upsilon$ into an assembly of blocks in the following form: 
\begin{equation}
    W_{4\times 4}=
    \frac{1-i}{2}\begin{pmatrix}
        A+iD&B-iC\\
        C-iB&D+iA
    \end{pmatrix},
\end{equation}
where $A,B,C$ and $D$ are $2\times 2$ spin representations of four independent quaternions arising from subsystem $l\cup r$. Because quaternions themselves are invariant under half-integer spin time reversal action, this matrix is invariant under the integer spin time reversal action of $\sigma_y\otimes\sigma_y K$.
After a unitary transformation by $\Upsilon$, we can make $W_{4\times 4}$ real:   
\begin{equation}
    \begin{split}
        K \Upsilon W_{4\times 4} \Upsilon^\dag K&=\Upsilon^3 KW_{4\times 4}K\Upsilon\\
        &=\Upsilon \left(\sigma_y\otimes\sigma_y K\right) W_{4\times 4} \left(K\sigma_y\otimes\sigma_y \right) \Upsilon^\dag\\
        &=\Upsilon W_{4\times 4} \Upsilon^\dag,
    \end{split}
\end{equation}
where we have used $\Upsilon^3=\overline{\Upsilon}=\Upsilon^{-1}$ and $\Upsilon^2=-\sigma_y\otimes\sigma_y$. To prove the independence between the matrix elements, we note that the covariance matrices before and after the transformation are related by
\begin{equation}
    \begin{split}
        &\mathbb{E}\left[\left[\Upsilon W_{4\times 4}\Upsilon^\dag\right]_{jj'}\left[\Upsilon W_{4\times 4}\Upsilon^\dag\right]_{ll'}\right]\\
        &=\mathbb{E}\left[\left[\Upsilon W_{4\times 4}\Upsilon^\dag\right]_{jj'}\overline{\left[\Upsilon W_{4\times 4}\Upsilon^\dag\right]_{ll'}}\right]\\
        &=\sum_{\alpha,\beta,\gamma,\delta}\Upsilon_{j,\alpha}\Upsilon^\dag_{\beta,j'}\overline{\Upsilon_{l,\gamma}\Upsilon^\dag_{\delta,l'}}\mathbb{E}\left[\left[W_{4\times 4}\right]_{\alpha,\beta}\overline{\left[W_{4\times 4}\right]_{\gamma,\delta}}\right].
    \end{split}
\end{equation}
A brute-force calculation shows that originally no pairs of 
different matrix elements are correlated, i.e., $\mathbb{E}\left[\left[W_{4\times 4}\right]_{\alpha,\beta}\overline{\left[W_{4\times 4}\right]_{\gamma,\delta}}\right]=\delta_{\alpha,\gamma}\delta_{\beta,\delta}$. 
The unitarity of $\Upsilon$ reveals that the new covariance matrix is also an identity:
\begin{equation}
    \begin{split}
        &\sum_{\alpha,\beta,\gamma,\delta}\Upsilon_{j,\alpha}\Upsilon^\dag_{\beta,j'}\overline{\Upsilon_{l,\gamma}\Upsilon^\dag_{\delta,l'}}\delta_{\alpha,\gamma}\delta_{\beta,\delta}\\
        &=\sum_{\alpha,\beta} 
        \Upsilon_{j,\alpha}\Upsilon^\dag_{\beta,j'}\overline{\Upsilon_{l,\alpha}\Upsilon^\dag_{\beta,l'}}\\
        &=\sum_{\alpha,\beta} 
        \Upsilon_{j,\alpha}\Upsilon^\dag_{\alpha,l}\Upsilon_{l',\beta}\Upsilon^\dag_{\beta,j'}\\
        &=\delta_{j,j'}\delta_{l,l'}.
    \end{split}
\end{equation}
Therefore, all the elements in $\Upsilon W_{4\times 4}\Upsilon^\dag$ are i.i.d. real Gaussian variables, and thus the LSE blocks are turned into LOE blocks. 

In summary, in the presence of $\Upsilon$, the matrix ensemble is identified as
\begin{gather}
    \left[\bigoplus_{\alpha:R_{-1}}\frac{\mathbb{1}_{d_\alpha}}{d_\alpha} \otimes \mathrm{LSE}_\alpha^{d_Ld_\alpha\times d_Rd_\alpha}\right]\nonumber\\
    \oplus \nonumber\\
    \left[\bigoplus_{\alpha:R_0}\frac{\mathbb{1}_{d_\alpha}}{d_\alpha} \otimes \left(\mathrm{LUE}_\alpha^{d_Ld_\alpha\times d_Rd_\alpha}\oplus\sigma_y\overline{\mathrm{LUE}_\alpha^{d_Ld_\alpha\times d_Rd_\alpha}}\sigma_y\right)\right]\nonumber\\
    \oplus\nonumber\\
    \left[\bigoplus_{\alpha:R_{1}}\frac{\mathbb{1}_{d_\alpha}}{d_\alpha} \otimes \Upsilon^\dag\mathrm{LOE}_\alpha^{d_Ld_\alpha\times d_Rd_\alpha}\Upsilon\right].
\end{gather}
The joint distribution of the entanglement spectrum is the same as Eq.~(\ref{eq:distrubution_antiunitary}).
The proof of Theorem \ref{thm:espectrum} has now been completed.

\section{A theorem on ensemble types and their numbers}

In our Letter, the main effect of symmetry fractionalization by $\Omega$ and $\Upsilon$ with nontrivial cohomology classes is changing the ensembles via changing the dimensions of irreps as well as their indicators. The latter effect is significant when $G=G_0\rtimes \mathbb{Z}_2^\mathcal{T}$ is considered. However, symmetry fractionalization has a limitation on changing the ensemble types and their numbers in the threefold-way decomposition:
\begin{theorem}
    \label{thm:ensemble_nogotheorem}
    Without $\Upsilon$, the numbers of increased LOE blocks (with possible degeneracies counted) 
    by the action of $\Omega$, is not larger than the numbers of increased LSE blocks. 
    Within $\Upsilon$, the numbers of increased LSE blocks by the action of $\Omega$ with a nontrivial cohomology, is not larger than the numbers of increased LOE blocks. 
\end{theorem}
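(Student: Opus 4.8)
The plan is to reduce the whole statement to a single scalar inequality for the projective, time‑reversal‑twisted Frobenius--Schur sum $S:=\sum_\alpha\iota_\alpha d_\alpha$, and to show that $S$ is largest for the trivial cohomology class. Only $G=G_0\rtimes\mathbb{Z}_2^{\mathcal T}$ is relevant here, since for $G=G_0$ there are no LOE or LSE blocks at all and the statement is vacuous.

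First I would evaluate $S$. Using the projective regular representation $\mathcal{D}_{\mathrm{reg}}(g)=\sum_{g'}\omega(g,g')\ket{gg'}\bra{g'}$ introduced above, whose character is $\chi_{\mathrm{reg}}(g)=|G_0|\delta_{g,e}$ and which decomposes as $\bigoplus_\alpha\mathbb{1}_{d_\alpha}\otimes\mathcal{D}^\alpha$, one has $\sum_\alpha d_\alpha\chi_\alpha(h)=|G_0|\delta_{h,e}$. Substituting this into $\iota_\alpha=|G_0|^{-1}\sum_g\omega(\tilde g,g)\chi_\alpha(\tilde g g)$ and summing against $d_\alpha$ gives
\[
S=\sum_{g\in G_0:\ \tilde g g=e}\omega(\tilde g,g).
\]
I would also check that $S$ depends only on the cohomology class: under an admissible coboundary the summand $\omega(\tilde g,g)$ with $\tilde g g=e$ is multiplied by $\beta(\tilde g g)/(\beta(\tilde g)\beta(g))=\beta(e)/(\beta(\tilde g)\beta(g))=1$, using $\beta(e)=1$ and the gauge constraint $\beta(g)\beta(\tilde g)=1$. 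Hence for the trivial class one may take $\omega\equiv 1$, obtaining $S=|T|$ with $T:=\{g\in G_0:\tilde g=g^{-1}\}$.

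Next I would rewrite Theorem~\ref{thm:ensemble_nogotheorem} in terms of $S$. By Theorem~\ref{thm:espectrum}, with degeneracies counted the LOE blocks carry total weight $N_{\mathrm{LOE}}=\sum_{\alpha\in R_+}d_\alpha$ and the LSE blocks $N_{\mathrm{LSE}}=\sum_{\alpha\in R_-}d_\alpha$, where $R_\pm=\{\alpha:\iota_\alpha=\pm\omega(t,t)\}$, so $N_{\mathrm{LOE}}-N_{\mathrm{LSE}}=\omega(t,t)\sum_\alpha\iota_\alpha d_\alpha=\omega(t,t)S$. The gate $\Omega$ switches $\omega$ restricted to $G_0$ from trivial to nontrivial while leaving $\omega(t,t)$ — equivalently the presence of $\Upsilon$ — and $d_L,d_R$ fixed, so the change of $N_{\mathrm{LOE}}-N_{\mathrm{LSE}}$ equals $\omega(t,t)(S-|T|)$. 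Without $\Upsilon$ one has $\omega(t,t)=+1$, and ``increase of LOE $\le$ increase of LSE'' is exactly $S-|T|\le 0$; with $\Upsilon$ one has $\omega(t,t)=-1$, and ``increase of LSE $\le$ increase of LOE'' is again exactly $S-|T|\le 0$. So the theorem collapses to the single inequality $S\le |T|$, and this bookkeeping — keeping straight which of $R_\pm$ is LOE and which is LSE as the sign $\omega(t,t)$ flips, and that counting ``with degeneracies'' means weighting irrep $\alpha$ by its outer multiplicity $d_\alpha$ so that the signed count is precisely $\omega(t,t)S$ — is the part I expect to be most delicate.

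Finally I would prove $S\le |T|$. Since $|\omega(\tilde g,g)|=1$ we already have $|S|\le|T|$, so only constructive interference has to be excluded. The set $T$ is closed under inversion ($\tilde g=g^{-1}\Rightarrow\widetilde{g^{-1}}=(\tilde g)^{-1}=g$), so decompose $T$ into orbits of $g\mapsto g^{-1}$. On a two-element orbit $\{g,g^{-1}\}$, the constraint $\omega(a,b)\omega(\tilde a,\tilde b)=1$ applied to $(a,b)=(g,g^{-1})$, for which $(\tilde a,\tilde b)=(g^{-1},g)$, gives $\omega(g,g^{-1})=\overline{\omega(g^{-1},g)}$, so the orbit contributes $2\,\mathrm{Re}\,\omega(g^{-1},g)\le 2$; on a fixed point $g=g^{-1}$ (hence $g^2=e$, $\tilde g=g$) the same constraint with $(a,b)=(g,g)$ gives $\omega(g,g)^2=1$, so the contribution is $\omega(g,g)=\pm 1\le 1$. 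Summing over orbits yields $S\le|T|$, which finishes the argument. The comparison is genuinely unaffected by $\Upsilon$: one compares the post-$\Omega$ decomposition with the pre-$\Omega$ one with $\Upsilon$, $d_L$ and $d_R$ all held fixed, so only the irreps $\mathcal{D}^\alpha$ and their indicators change.
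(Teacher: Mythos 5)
Your proof is correct and takes essentially the same route as the paper's: both reduce the claim to the twisted Frobenius--Schur indicator of the projective regular representation, $\iota_{\mathrm{reg}}=\sum_{g:\,\tilde g g=e}\omega(\tilde g,g)=\sum_\alpha \iota_\alpha d_\alpha=\omega(t,t)\,(N_{\mathrm{LOE}}-N_{\mathrm{LSE}})$, and bound this real sum of unit phases by its trivial-cocycle value $|T|$. Your explicit gauge-invariance check, the careful $\omega(t,t)$-sign bookkeeping covering both the $\Upsilon$-present and $\Upsilon$-absent cases, and the orbit decomposition of $T$ under inversion are welcome refinements of the paper's terser argument, but they do not constitute a different approach.
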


\begin{proof}

We first discuss the notations. 
Assume the symmetry with TRS: $G=G_0\rtimes \mathbb{Z}_2^\mathcal{T}$.
First, $\iota_\mathrm{rep}$ and $\iota'_\mathrm{rep}$ denote the indicators of certain representations with trivial and nontrivial cohomologies of $G_0$.
Without $\Omega$, we get the sets of $\iota_\alpha=\pm\omega(t,t),0$ linear irreps of $G_0$ to be $R_\pm,\,R_0$, respectively. After applying $\Omega$, likewise, we obtain the sets of $\iota'_\alpha=\pm\omega(t,t),0$ projective irreps of $G_0$ to be $R'_\pm,\,R'_0$, respectively.

In general, the indicator for the regular representation of group $G_0$ is
\begin{equation}
\begin{split}
    \iota^{(')}_\mathrm{reg}&=\frac{1}{|G_0|}\sum_{g\in G_0}\omega(\tilde{g},g)\chi(\tilde{g}g)\\
    &=\sum_{\tilde{g}g=e}\omega(\tilde{g},g).
\end{split}
\end{equation}
If we assume $\Omega$ with a possibly nontrivial cohomology is applied, since $\omega(\tilde{g},g)\in U(1)$ and the indicator $\iota_\mathrm{reg}$ is always real, the indicator $\iota'_\mathrm{reg}$ cannot be larger than $\iota_\mathrm{reg}$.

From a different perspective, the indicator for the regular representation can be decomposed into the indicators of irreps. Here, the sum can be decomposed into the irrep types:
\begin{equation}
\begin{split}
    \iota^{(')}_\mathrm{reg}&=\sum_{\alpha} d_\alpha \iota_\alpha\\
    &=\sum_{\alpha\in R_+} d_\alpha-\sum_{\alpha\in R_-} d_\alpha\\
    &=D_+-D_-
\end{split}
\end{equation}
where $D_\pm=\sum_{\alpha\in R_\pm}d_\alpha$, which means the number of LOE, and/or LSE blocks with counting degeneracies as duplications. Similarly we can define $D'_\pm=\sum_{\alpha\in R'_\pm}d_\alpha$. 
If we assume $\omega(t,t)=1$, $\iota^{(')}_\mathrm{reg}$ is the difference of the number of LOE blocks with duplications minus the number of LSE blocks with duplications. The discussion above implies $\iota_\mathrm{reg}\geq \iota'_\mathrm{reg}$. This completes the proof of the 
theorem. 
\end{proof}

\section{Other representations}

In our Letter, we used the regular representation to construct $G$-symmetric bases, but we will show that similar constructions are possible for other representations and that the threefold-way decomposition holds.

\subsection{$G=G_0$}

Let us start from the case of general linear representations in the absence of TRS. We make a minimal assumption that the symmetry representation on $a\cup b$ takes the form of $D_a(g)\otimes D_b (g)$, which obviously covers the case of arbitrary on-site representations. We can decompose each representation into irreps:
\begin{equation}
D_a=\bigoplus_\alpha \mathbb{1}_{a_\alpha} \otimes D^\alpha,\;\;\;\;
D_b=\bigoplus_\alpha \mathbb{1}_{b_\alpha} \otimes D^\alpha,
\label{DaDb}
\end{equation}
where $a_\alpha$ and $b_\alpha$ are non-negative integers. The projector onto the global symmetric subspace spanned by $|\Psi\rangle=D_a(g)\otimes D_b (g)|\Psi\rangle$ $\forall g\in G$ reads
\begin{equation}
P=\frac{1}{|G|}\sum_{g\in G}D_a(g)\otimes D_b (g).
\label{P}
\end{equation}
Substituting Eq.~(\ref{DaDb}) into Eq.~(\ref{P}), 
we obtain
\begin{equation}
\begin{split}
    P&=\bigoplus_{\alpha,\beta}\mathbb{1}_{a_\alpha}\otimes \left[\sum_{g\in G}\frac{1}{|G|} D^\alpha(g)\otimes D^\beta(g) \right]\otimes \mathbb{1}_{b_\beta}.\\
\end{split}
\end{equation}
Applying the grand orthogonality theorem and dropping the zero subspaces, we end up with
\begin{equation}
P=\bigoplus_\alpha \mathbb{1}_{a_\alpha}\otimes\mathbb{1}_{b_{\bar \alpha}},
\end{equation}
where $\bar \alpha$ denotes the irrep related to $\alpha$ via complex conjugation. Therefore, a random state within the subspace of $P$ is determined by i.i.d. coefficients $c_{L_\alpha,\alpha,R_\alpha}$ with  the degree (i.e., number of labels) of $L_\alpha$ ($R_\alpha$) being $a_\alpha$ ($b_{\bar\alpha}$). Moreover, the explicit eigenstate of $\sum_{g\in G}D^\alpha(g)\otimes D^{\bar\alpha}(g)$ should be a Bell (maximally entangled) state of qudits with $d=d_\alpha$ (again due to the grand orthogonality theorem), we can thus write down a symmetric random state as
\begin{equation}
|\Psi\rangle = \sum_{\alpha,L_\alpha,R_\alpha} c_{L_\alpha,\alpha,R_\alpha} \frac{1}{\sqrt{d_\alpha}}\sum^{d_\alpha}_{j_\alpha=1}|L_\alpha j_\alpha j_\alpha R_\alpha \rangle.
\label{cLaR}
\end{equation}
The above procedure of identifying the global pure state could be considered as a highly abstract version (and in the dual picture of irreps) of symmetry concentration. According to Eq.~(\ref{cLaR}), the reduced density matrix on $a$ 
\begin{equation}
\rho_a=\bigoplus_\alpha \frac{\mathbb{1}_{d_\alpha}}{d_\alpha} \otimes {\rm LUE}_\alpha^{a_\alpha\times b_{\bar\alpha}}. 
\label{ra}
\end{equation}

Replacing $D_{a,b}$ by $\mathcal{D}_{a,b}$, which are two projective representations with opposite cohomology classes, we can perform similar calculations and end up with formally the same result (\ref{ra}). This straightforward generalization relies on the fact that if $\mathcal{D}^\alpha$ is a projective irrep, $\overline{\mathcal{D}^\alpha}$ will be a projective irrep with respect to the opposite cohomology class.

\subsection{$G=G_0\rtimes \mathbb{Z}_2^\mathcal{T}$}

We move on to incorporate the TRS. First let us return to usual group representations. Considering $\overline{D^\alpha(\tilde g)}$ ($\tilde g = tgt$) as an irrep of $G_0$ in terms of $g$, we obtain a possibly different irrep labeled by $\alpha^\star$, whose relation to $D^\alpha$ is determined by the twisted index. Accordingly, if $D_{a,b}$ is compatible with TRS, i.e., $D(t)D_{a,b}(g) D(t)=D_{a,b}(\tilde g)$, the numbers of irreps $\alpha$ and $\alpha^\star$ (if $\alpha\neq\alpha^\star$) should be equal. More concretely, we can write down $D_a(t)$ (similar for $D_b(t)$) as 
\begin{equation}
\begin{split}
    D_a(t) &= \left[\bigoplus_{\alpha: R_+} \mathbb{1}_{a_\alpha}\otimes\mathbb{1}_{d_\alpha}\right]\oplus
\left[\bigoplus_{\alpha: R_0}  \sigma_x\otimes\mathbb{1}_{a_\alpha}\otimes\mathbb{1}_{d_\alpha}\right]\\
&\quad \oplus \left[ \bigoplus_{\alpha: R_-} Y_{a_\alpha}\otimes Y_{d_\alpha} \right] K,
\end{split}
\end{equation}
where $K$ denotes the complex conjugation, $R_{\pm,0}$ refer to the sets of irreps with $\iota_\alpha=\pm ,0$ (where $R_0$ only includes one component in $\{\alpha,\alpha^\star\}$), $\sigma_x$ exchanges $\alpha$ and $\alpha^\star$ subspaces, and $Y_n =\sigma_y\otimes\mathbb{1}_{n/2}$. Therefore, imposing $D_a(t)\otimes D_b(t)|\Psi\rangle = |\Psi\rangle$ causes further constraints on $c_{L_\alpha,\alpha,R_\alpha}$ in Eq.~(\ref{cLaR}), leading to
\begin{equation}
\begin{split}
    \rho_a=&\left[\bigoplus_{\alpha: R_+} \frac{\mathbb{1}_{d_\alpha}}{d_\alpha} \otimes {\rm LOE}_\alpha^{a_\alpha\times b_{\bar\alpha}}\right]\\ 
    &\oplus\left[\bigoplus_{\alpha:R_0} \frac{\mathbb{1}_{d_\alpha}}{d_\alpha} \otimes \left({\rm LUE}_\alpha^{a_\alpha\times b_{\bar\alpha}}\oplus \overline{  {\rm LUE}_\alpha^{a_\alpha\times b_{\bar\alpha}} }\right) \right]\\
    &\oplus\left[\bigoplus_{\alpha: R_-} \frac{\mathbb{1}_{d_\alpha}}{d_\alpha} \otimes {\rm LSE}_\alpha^{a_\alpha\times b_{\bar\alpha}}\right]. 
\end{split}
\label{rat}
\end{equation}
Similar derivations hold true if the unitary parts of the representations are projective.

Finally we discuss the case of TRS fractionalization. The only difference turns out to be
\begin{equation}
\begin{split}
    D_a(t) &= \left[\bigoplus_{\alpha: R_+} Y_{a_\alpha}\otimes\mathbb{1}_{d_\alpha}\right]\oplus
\left[\bigoplus_{\alpha: R_0}  \sigma_x\otimes Y_{a_\alpha}\otimes\mathbb{1}_{d_\alpha}\right]\\
&\quad \oplus \left[ \bigoplus_{\alpha: R_-} \mathbb{1}_{a_\alpha}\otimes Y_{d_\alpha} \right] K,
\end{split}
\end{equation}
leading to formally the same result (\ref{rat}), as long as we reinterpret $R_\pm$ as the set of irreps with $\iota_\alpha=\pm\omega(t,t)$. These results clearly reduce to Theorem~1 for regular representations.


\end{document}